\newtheorem{theorem}{Theorem}
\newtheorem{definition}{Definition}
\numberwithin{tsubcase}{tcase}
\newcommand{\tibc}{2-{\rm IBC}}
\newcommand{\thibc}{3-{\rm IBC}}
\newcommand{\nbc}{{\rm IBC}}
\newcommand{\Bb}{\mathcal{B}}
\newcommand{\id}{\mathbf{I}} 
\def\ben{\begin{equation*}}
\def\een{\end{equation*}}
\def\be{\begin{equation}}
\def\ee{\end{equation}}
\begin{document}
\allowdisplaybreaks 
\addtolength{\jot}{1em} 

\title{ Interplay of nonlocality and incompatibility breaking  qubit channels}
	\author{Swati Kumari}
\email{swatipandey084@gmail.com}
\affiliation{Department of Physics and Center for Quantum Frontiers of Research \& Technology (QFort), National Cheng Kung University, Tainan 701, Taiwan
 }
\author{Javid Naikoo}
\email{j.naikoo@cent.uw.edu.pl}
\affiliation{Centre for Quantum Optical Technologies, Centre of New Technologies, University of Warsaw, Banacha 2c, 02-097 Warsaw, Poland}
\author{Sibasish Ghosh}
\email{sibasish@imsc.res.in}
\affiliation{Optics and Quantum Information Group, The Institute of Mathematical Sciences, HNBI, CIT Campus, Taramani, Chennai 600113, India}
\author{A. K. Pan}
\email{akp@phy.iith.ac.in}
\affiliation{Department of physics, Indian Institute of Technology Hyderabad Kandi,Telengana, India}	
\begin{abstract}
Incompatibility and nonlocality are not only of foundational interest but also act as important resources for quantum information theory. In the Clauser-Horne-Shimony-Holt (CHSH) scenario, the incompatibility of a
pair of observables is known to be equivalent to Bell nonlocality. Here, we investigate these notions in the context of qubit channels. The Bell tripartite nonlocality scenario—while determining the interplay between nonlocality breaking qubit channels and incompatibility breaking qubit channels. In the Bell-CHSH scenario, we prove that if the conjugate of a channel is incompatibility breaking, then the channel is itself nonlocality breaking and vice versa. However, this equivalence is not straightforwardly generalized to multipartite systems, due to the absence of an equivalence relation between incompatibility and nonlocality in the multipartite scenario. We investigate this relation in
the tripartite scenario by considering some well-known states like Greenberger-Horne-Zeilinger and $W$ states and using the notion of Mermin and Svetlichny nonlocality. By subjecting the parties in question to unital
qubit channels, we identify the range of state and channel parameters for which incompatibility coexists with nonlocality. Further, we identify the set of unital qubit channels that is Mermin or Svetlichny nonlocality breaking irrespective of the input state.
\end{abstract}
\maketitle
\section{Introduction}
 Nonlocality is one of the most profound notions in quantum mechanics ~\cite{bell1964einstein} and is often discussed  in conjunction with incompatibility of observables. Recent developments in quantum information theory have found nonlocality  a useful phenomenon underpinning many advantages afforded by various quantum information  processing tasks ~\cite{Brunner2014bell}. Nonlocality can also be considered as a potential quantum resource for information processing, such as in developing quantum protocols to reduce the amount of communication needed in certain computational tasks  \cite{Brunner2014bell} and providing secure quantum communications  \cite{scarani2001,ekart1991}. Incompatibility, like nonlocality, is not merely of theoretical interest but of practical utility, for example, in order to explore the advantage of entanglement shared by two parties in a cryptography  task, each party needs to carry out measurements that are  incompatible, in the sense that these cannot be  carried out simultaneously by a single measurement device. Incompatibility should not be confused with noncommutativity or the related concept of uncertainty principle. The notion of incompatibility is best understood in terms of joint measurability \cite{heinosaari2016invitation}. A collection of quantum measurements is jointly measurable, if it can be simulated by a single common quantum measurement device. If such a single common device cannot be constructed by a given set of quantum measurements, it then enables the set to be used as a quantum resource. This was first noted in \cite{wolf2009measurements} in the context of Clauser-Horne-Shimony-Holt  (CHSH) inequalities and later in the Einstein-Podolsky-Rosen steering, which is more explicit, when incompatibility appears as a quantum resource. Incompatibility is necessary and sufficient for the violation of the steering inequalities \cite{quintino,ola}. The relation between incompatibility and contextuality has also been studied in references \cite{kochen1975problem,budroni2021quantum}.  Further, a set of observables that is pairwise incompatible, but not triplewise can violate the Liang-Spekkens-Wiseman noncontextuality inequality \cite{LIANG20111}. Recently, the connection between steerability and measurement incompatibility was studied in \cite{huannature22} in the context of the so-called steerability equivalent observables. Thus, both nonlocality and incompatibility can be considered as quantum resources whose understanding is of utmost importance in view of emerging quantum technologies.

The interplay of nonlocality and incompatibility has been a subject matter of various studies. It is well known that any incompatible local measurements, performed by  the constituent parties of a system, lead to the violation of Bell inequality provided they share a pure entangled state ~\cite{bell1964einstein,Brunner2014bell}. Absence of either of them (i.e., entanglement or  incompatibility) will not allow the system to exhibit nonlocality. It is important to mention here that the  notion of quantum nonlocality without entanglement has been proposed in~\cite{Bennett1999} which is different from Bell nonlocality~ \cite{bell1964einstein} and amounts to the inability of discriminating a set of product states by local operations and classical communication, while mutual orthogonality of the states assures their perfect global discrimination.
 
  Further, for any  pair of dichotomic incompatible observables, there  always exists an entangled state which enables the violation of a Bell inequality  ~\cite{wolf2009measurements}. The relationship of incompatibility and nonlocality is sensitive to the dimension of the system;  for example, increasing the dimension beyond $2$, the incompatible observables do not necessarily lead to the violation of Bell-type inequalities,  implying that the measurement incompatibility can not guarantee nonlocality in general  ~\cite{bene2018measurement,hirsch2018quantum}. Here, we probe the interplay between incompatibility and nonlocality in the tripartite case by using the well-known Mermin and Svetlichny inequalities ~\cite{Svetlichny1987}. The Svetlichny inequality, unlike the Mermin inequality, is a genuine measure of nonlocality that assumes nonlocal correlations between two parties which are locally related to a third party and is known to provide a suitable measure to detect tripartite nonlocality for $W$ and Greenberger-Horne-Zeilinger (GHZ) classes of states \cite{Rungta2010}. We refer the interested reader to  \cite{Bancal2013,Brunner2014bell} for various facets of the multipartite nonlocality.

The extent to which a system can exhibit nonlocal correlations is also sensitive to its interaction with the  ambient environment. Such interaction is usually accompanied with a depletion of various quantum features like coherence, entanglement and nonlocality. The reduced dynamics of the system in such cases is given by completely positive and trace  preserving maps, also known as quantum channels (QCs). On the other hand, the action of conjugate channels on projective measurements turns them into unsharp positive operator-valued measures (POVMs) which may be biased,  in general. In light of the above discussion, a study of the  open system effects on the interplay of nonlocality and incompatibility naturally leads to the notions of  nonlocality breaking  and incompatibility breaking   quantum channels  ~\cite{pal2015non, heinosaari2016invitation}. A nonlocality breaking channel (NBC) can be defined as a channel which when applied to a system (or part of it) leads to a state which is local \cite{pal2015non}, while the  incompatibility breaking channel (IBC) is the one that turns incompatible  observables into compatible ones \cite{heinosaari2015,heinosaari2015incompatibility}. An IBC that renders any set of $n$ incompatible observables  compatible would be denoted by $n$-IBC. The notion of the NBC has been introduced in a  similar spirit of  well-studied entanglement breaking channels \cite{horodecki10}. Every entanglement breaking channel is nonlocality breaking but the converse is not true. As an example, the  qubit depolarizing channel  $ \mathcal{E}\big(\rho\big) :=p\big(\rho\big)+(1-p)\mathbf{I}/2$ is CHSH nonlocality breaking  for all  $\frac{1}{3} \leq p \leq \frac{1}{2}$, but not entanglement breaking \cite{pal2015non}. Hence, based on this classification, nonlocality and entanglement emerge as different resources.

  The equivalence of the steerability breaking channels and the incompatibility breaking channels was reported in \cite{Kiukas2017} and CHSH nonlocality breaking channels were shown to be a strict subset of the steerability breaking channels  \cite{huanyu2022}.    The connection between Bell nonlocality and incompatibility of two observable is well understood, however, the question of the equivalence between NBC ~\cite {pal2015non} and  IBC ~\cite {heinosaari2015incompatibility} is rarely discussed. This motivates us to  explore the relation between CHSH nonlocality breaking channels (CHSH-NBC) and $2$-IBC,  such that the action of one may be replaced by the other. The tripartite nonlocality has much richer and complex structure and less  is known about its synergy with incompatibility as compared to its bipartite counterpart. Mermin inequality assumes local-realistic correlations among all  the three qubits; hence a violation would be a signature of the tripartite nonlocality shared among the qubits. However, biseparable states were shown to also violate the Mermin inequality  \cite{Collins2022, scarani2001}. This motivated  Svetlichny  to introduce the notion of genuine tripartite nonlocality \cite{Svetlichny1987} and provide a set of inequalities sufficient to witness it. We make use of these notions of   \emph{absolute} and  \emph{genuine} nonlocality  to figure out the ranges of state and channels parameters in which NBC and $2$-IBC  coexist. 

The paper is organized as follows. In Sec.~ \ref{Preliminaries}, we revisit some basic notions and definitions used in this paper. Section~\ref{sec:results} is devoted to results and their discussion where we  prove an equivalence between NBCs and $2$-IBCs in the CHSH scenario. This is followed by an analysis of these notions in the tripartite scenario, where we identify the state and channel parameters in which NBCs and 2-IBCs co-exist. A conclusion is given in  Sec.~\ref{sec:Concl}.
\section{Preliminaries} \label{Preliminaries}
In this section, we discuss the notion of incompatibility in the context of observables and quantum channels and look at specific cases of bipartite and tripartite scenarios. 

		
\subsection{Incompatibility}
\subsubsection{Incompatibility of observables}
  A finite collection of observables $A_1, \dots, A_n$ associated with the respective outcome spaces $\Omega_{A_1},  \dots, \Omega_{A_n}$, is said to be \textit{compatible} (or \textit{jointly measurable}) if there exists a \textit{joint observable} $G$, defined over the product outcome space $\Omega_{A_1} \times \cdots \times \Omega_{A_n}$, such that for all $X_1 \subset \Omega_{A_1}, \dots, X_n \subset \Omega_{A_n} $, the following marginal relations hold \cite{pbusch}: 
\begin{align}
	\sum_{a_i, i=1,\dots,n;  i \ne k} G(a_1, \dots, a_n)    =  A_k (a_k)
\end{align}
where $a_k$ is the outcome associated to observable $A_k$ and the summation is carried out over all outcomes $a_i$ except for $i = k$. The notion of the incompatibility of observables can be illustrated by a simple example  of Pauli matrices $\sigma_x$ and $\sigma_z$ which are noncommuting and can not be measured jointly. However, consider the unsharp observables $S_x(\pm)= \frac{1}{2}(\mathbf{I} \pm  \frac{1}{\sqrt{2}} \sigma_x)$ and $S_z(\pm) = \frac{1}{2}(\mathbf{I} \pm  \frac{1}{\sqrt{2}} \sigma_z)$, with  $\mathbf{I}$ being the $2 \times 2$ identity matrix. The joint observable $G(i,j) = \frac{1}{4} (\mathbf{I} + \frac{i}{\sqrt{2}} \sigma_x + \frac{j}{\sqrt{2}} \sigma_z)$ with $i,j = \pm 1$ jointly determines the probabilities of generalized measurements $S_x$ and $S_z$, since the later can be obtained as marginals $S_x(\pm) = \sum_{j} G(\pm, j)$ and $S_z(\pm) = \sum_{i} G(i,\pm)$.\bigskip

\subsubsection{Incompatibility breaking quantum channel}
A QC, in the Schr{\"o}dinger picture, is a completely positive trace preserving   map $\mathcal{E}:\mathcal{L}({\mathcal{H^A}}) \rightarrow {\mathcal{L}}({\mathcal{H^B}})$,   where ${\mathcal{L}}({\mathcal{H}^i})$ is the set of bounded linear operators on Hilbert space $\mathcal{H}^i$ ($i=A, B$).  
One may write the operator sum representation~\cite{nielsenchuangbook}
\begin{align}
	\label{Cha}
	\rho^\prime =\mathcal{E}\big(\rho\big) =\sum_{j=1}^{n}K_j\rho K_j^\dagger
\end{align}
where $K_i$ are known as Kraus operators satisfying the completeness relation $\sum_j K^\dag_j K_j=\id$. The QCs which map the identity
operator to itself, i.e., $\mathcal{E}(\id)=\id$, are  known as unital QCs.  A quantum channel $\mathcal{E}$ in the Schr{\"o}dinger picture acting on quantum state $\rho$ can be thought of as \textit{conjugate} channel $\mathcal{E}^*$ acting on observable $A$ through the following duality relation 
\begin{equation}\label{eq:SHpictures}
	\Tr \left[ \mathcal{E}(\rho) A \right] = \Tr \left[\rho~ \mathcal{E}^*(A) \right]. 
\end{equation}
\begin{definition}
A quantum channel $\mathcal{E} $ is said to be \textit{incompatibility breaking} if the outputs $\mathcal{E}^*(A_1), \dots, \mathcal{E}^*(A_n)$ are compatible for any choice of input observables $A_1, \dots, A_n$ $(n\geq2)$. 
\end{definition} 
If a channel $\mathcal{E}$ breaks the incompatibility of every class of $n$ observables, it is said to be ${n-\mathbf{IBC}}$. For example, a channel would be  $\tibc$  if it  breaks the incompatibility of a pair of observables. As an example, the white noise mixing channel $\mathcal{W}_{\eta}$  is described as 
\begin{equation}\label{eq:Wt}
	\rho^\prime = \mathcal{W}_{\eta} [\rho] = \eta \rho + (1-\eta) \frac{\mathbf{I}}{d}.
\end{equation}
Here, $\eta$ is the channel parameter, $d$ is the dimension of the underlying Hilbert space. This channel is $n$\textit{-incompatibility breaking} for all $ 0 \le \eta \le \frac{n+d}{n(d+1)}$. With $d=2$, $\mathcal{W}_{\eta}$ is $\tibc$ and $\thibc$ for $\eta\le 0.66$ and $0.55$, respectively \cite{heinosaari2015incompatibility}.\bigskip

\subsubsection{Incompatibility of generalized spin-observables}

 Consider a spin observable $A = \hat{a} \cdot \vec{\sigma}$ with projectors $P_{\pm} (\vec{a}) = \frac{1}{2}(\mathbf{I} \pm A)$, with $|\vec{a}|=1$. In  presence of a noise channel $\mathcal{E}$, these projectors are mapped to the \textit{noise induced} POVM $P_\pm (\vec{\alpha})$,  by the transformation 

\begin{equation}\label{eq:observable}
	A \xrightarrow[  ]{\mathcal{E}^*} A^{(x, \eta)}= x \mathbf{I}  + \eta \hat{a} \cdot \vec{\sigma}.
\end{equation}
Here, $x$ and $\eta$ characterize the bias and sharpness such that un-biased projective measurements correspond to $x=0$ and $\eta = 1$. Two biased and unsharp observables $A^{(x, \eta)}$ and $B^{(y, \xi)}$ are jointly measurable if \cite{Yu2010, Heinosaari2020}
\begin{equation}\label{eq:IBxeta}
	\left(1- S(x, \eta)^2 - S(y, \xi)^2 \right) \left( 1 - \frac{\eta^2 }{S(x, \eta)^2} - \frac{\xi^2 }{S(y, \xi)^2} \right) \le \left( \vec{\eta} \cdot \vec{\xi} - xy\right),
\end{equation}
where $\vec{\eta} = \eta \hat{a}$ and $\vec{\xi} = \xi \hat{b}$ are the unsharpness parameters, with 
\begin{equation}
	S(p,q) = \frac{1}{2} \left( \sqrt{\left(1+p\right)^2 - q^2} + \sqrt{\left(1-p\right)^2 - q^2} \right).
\end{equation}
For unbiased observables $A^{(0,\eta)}$ and $B^{(0,\xi)}$, the necessary and sufficient condition for compatibility simplifies to the  following
\begin{align}\label{eq:coplanar}
	|\vec{\eta} + \vec{\xi}| + |\vec{\eta} - \vec{\xi}|  \le 2.
\end{align}

When both the observables are subjected to identical noise channel i.e., $\xi  = \eta$  and $\vec{\eta}$ and $\vec{\xi}$ are perpendicular, then the pairwise joint measurability condition Eq. (\ref{eq:coplanar}) becomes
\begin{equation}\label{pjm}
	\eta\leq \frac{1}{\sqrt{2}}.
\end{equation}
This provides the condition for the incompatibility breaking of two observables and the corresponding (unital) channel is called an incompatibility breaking channel ($\tibc$). 
\subsection{Nonlocality}
\subsubsection{Bipartite nonlocality: CHSH inequality}
Consider the scenario of two spatially separated qubits with observables  $\hat{A}_i=\sum_{k=1}^{3}{\hat{a_{ik}}} \cdot {\hat{\sigma_k}}$  and $\hat{B}_j=\sum_{l=1}^{3}{\hat{b_{jl}}} \cdot {\hat{\sigma_l}}$ acting on each qubit respectively, where $\hat{a_{ik}}$ and  $\hat{b_{jl}}$ are unit vectors in $\mathbb{R}^3$; $i,j=1,2$; and the $\hat{\sigma}_i$'s are spin projection operators. The Bell operator associated with the CHSH inequality has the form
\begin{equation}
\label{Bell-op}
\hat{\Bb} = \hat{A}_1\otimes(\hat{B}_1 + \hat{B}_2) + \hat{A}_2\otimes(\hat{B}_1-\hat{B}_2), 
\end{equation}
Then the Bell-CHSH inequality, for any state $\rho $ is 
\begin{equation}
\label{bell-ineq}
\Trace\left[\rho ~\hat{\Bb}(\hat{A}_{1},~\hat{A}_{2},~\hat{ B}_{1},~\hat{B}_{2}) \right] \leq 2.
\end{equation}
where the observables on Alice's and Bob's  side are pairwise compatible. The violation of the above inequality (\ref{bell-ineq}) is sufficient to justify the nonlocality of the  quantum state.   Since  incompatible observables acting on entangled particles enable nonlocality, thus incompatibility is necessary for violation of (\ref{bell-ineq}). \bigskip

The essence of  Eq. (\ref{eq:SHpictures}) is that  the effect of noise as decoherence of the nonlocal resource $\rho$, can be interpreted as distortion of Alice's local measurement resource (incompatibility).  This  fact was exploited by Pal and Ghosh \cite{pal2015non} to introduce the notion of   CHSH-NBC defined below.

\begin{definition}
	Any qubit channel $\mathcal{E}:\mathcal{E}({\mathcal{H}}^i) \rightarrow {\mathcal{E}}({\mathcal{H}}^i)$ is  said to be  $\nbc$ if applying on one side of (arbitrary) bipartite  state $\rho_{AB}$, it produces a state $\rho'_{AB}= (\id \otimes \mathcal{E}) (\rho_{AB})$ which satisfies the Bell-CHSH inequality (\ref{bell-ineq}).
\end{definition}
This means that for any choice of POVMs  $\{\pi_{a|x}^{A}\}$ and 
	$\{\pi_{b|y}^{B}\}$ on subsystems $A$ and $B$, respectively,  there exist conditional distributions $P(a|x,\lambda)$ and $P(b|y,\lambda)$ and shared variable $\lambda$, such that
\begin{eqnarray}
\Tr\left[( \pi_{a|x}^{A}\otimes\pi_{b|y}^{B} )\sigma^{AB} \right]=\int d\lambda ~p(\lambda) ~P(a|x,\lambda) ~P(b|y,\lambda).
\end{eqnarray}
The violation of the Bell inequalities in the measurement statistics $P(ab|xy,\lambda)=\Tr[( \pi_{a|x}^{A}\otimes\pi_{b|y}^{B} )\sigma^{AB}]$ indicates that $\sigma^{AB}$ is not a local state. A unital channel is particularly important as it breaks the nonlocality for any state when it does so for maximally entangled states \cite{pal2015non}.\bigskip

\subsubsection{Tripartite nonlocality: Svelitchny inequality}

In a tripartite Bell scenario, with Alice, Bob, and Charlie performing measurements $A$, $B$ and $C$ having outcome $a$, $b$ and $c$  respectively, if the joint correlations can be written as
\begin{align}\label{eq:local}
P(abc|ABC) &= \sum_{l} c_l P_{l}(a|A)P_{l}(b|B) P_{l}(c|C)
\end{align}
with $0 \le c_l \le 1$ and $\sum_l c_l =1$, then they are local \cite{Mermin1990,Brunner2014bell}.
Svetlichny \cite{Svetlichny1987} proposed the \textit{hybrid} local nonlocal form of probability correlations
\begin{align}\label{eq:hybrid}
P(abc|ABC) &= \sum_{l} c_l P_{l}(ab|AB) P_{l}(c|C)  + \sum_{m} c_l P_{m}(ac|AC) P_{m}(b|B) \nonumber \\& + \sum_{n} c_l P_{n}(bc|C) P_{n}(a|A),
\end{align}
with $0 \le c_l, c_m, c_n \le 1$ and $\sum_l c_l + \sum_m c_m + \sum_n c_n =1$.
The quantum version involves triplets of particles subjected to independent dichotomic measurements with operators $\hat{A}_1$ and $\hat{A}_2$ for Alice, $\hat{B}_1$ and $\hat{B}_2$ for Bob, and $\hat{C}_1$ and $\hat{C}_2$ for Charlie, with  each measurement resulting in outcome $\pm 1$.  One defines the Mermin operator \cite{Mermin1990} as 
\begin{equation}
	\hat{M}  = \hat{A}_1 \otimes \hat{B}_1  \otimes \hat{C}_2 + \hat{A}_1 \otimes \hat{B}_2 \otimes \hat{C}_1 + \hat{A}_2 \otimes \hat{B}_1 \otimes \hat{C}_1 - \hat{A}_2 \otimes  \hat{B}_2 \otimes \hat{C}_2.
\end{equation}
Similarly, the Svetlichny operator is defined as \cite{Svetlichny1987,Cereceda2002}
\begin{eqnarray}\label{eq:SvetDef}
	\hat{S} &=&\hat{A}_1\otimes\big[(\hat{B}_1 + \hat{B}_2)\otimes \hat{C}_1 +(\hat{B}_1 - \hat{B}_2)\otimes \hat{C}_2\big]\nonumber\\
	&+& \hat{A}_2 \otimes\big[(\hat{B}_1 - \hat{B}_2)\otimes \hat{C}_1 -(\hat{B}_1 + \hat{B}_2)\otimes \hat{C}_2\big].
\end{eqnarray}
The respective average values of these operators are classically upper bounded in the form of the Mermin and Svetlichny inequalities written as
\begin{equation}
|\langle \hat{M} \rangle | \le 2, \qquad |\langle \hat{S} \rangle | \le 4.
\end{equation}  
The maximum quantum bounds are  known to be $2\sqrt{2}$ and $4\sqrt{2}$ for $|\langle \hat{M} \rangle |$ and $|\langle \hat{S} \rangle |$, respectively, and can be attained by, for example, the  GHZ state \cite{Mitchell2004}. Several approaches have been adopted to find the maximum quantum value of the Svetlichny operator \cite{SGhose2009,Su2018}. Recently, \cite{Siddiqui2019,li2017tight} have analytically found the tight upper bound of the Mermin and Svetlichny operator as given below.

\begin{definition}\label{def:SIQmax}
	For any three-qubit quantum state $\rho$, the maximum quantum value of the Mermin and  the Svetlichny operator is bounded as {\rm \cite{li2017tight,Siddiqui2019} }
	\begin{eqnarray}
		\label{sveq}
	\max |\langle \hat{M} \rangle_\rho| \leq 2\sqrt{2} \lambda_1, \qquad	\max |\langle \hat{S} \rangle_\rho| \leq 4\lambda_1,
	\end{eqnarray}
	where $\langle\hat{M }\rangle_\rho=\Tr[\hat{M}\rho]$,  $\langle\hat{ S }\rangle_\rho=\Trace[\hat{S}\rho]$ and $\lambda_1$ is the maximum singular value of the matrix $M=(M_{j,ik})$, with  $M=(M_{ijk}=\Trace[\rho(\sigma_i\otimes\sigma_j\otimes\sigma_k)]$, $i,j,k=1,2,3$.
\end{definition}

 We refer the reader to \cite{Siddiqui2019,li2017tight} for the class of states that saturate the above inequalities.  We  make use of above mentioned bounds to study the nonlocality breaking property of (unital) channels acting on one (or more) party  in the tripartite scenario. 

\section{Results and discussion}\label{sec:results}
In what follows, we will make use of the fact that every qubit channel $\mathcal{E}$ can be represented in the Pauli basis $\{ \sigma_0, \sigma_1, \sigma_2, \sigma_3 \}$, where $\sigma_0 = \mathbf{I}$, by a unique $4\times 4$ matrix $\mathbb{M}_{\mathcal{E}}  =[1, \bm{0}; \textbf{t}, \textbf{T}]$ \cite{Ruskai2002,jn2021}. Here $\textbf{T} = {\rm diag.}[\eta_1, \eta_2, \eta_3]$ is a real diagonal matrix and $\bm{0} = (0~0 ~0)$ and $\textbf{t} = (t_1~t_2~t_2)^T$ are row and column vectors, respectively. For $\mathcal{E}$ to be unital i.e., $\mathcal{E}(\mathbf{I}) = \mathbf{I}$, we must have $\textbf{t} = (0~0 ~0)^T$. The conjugate map $\mathcal{E}^\dagger$ is characterized by $\mathbb{M}^\dagger_{\mathcal{E}}  = [1, \textbf{t}^T; \bm{0}^T , \textbf{T}]$, such that the action on a state $\rho = \frac{1}{2}(\mathbf{I} + \textbf{w} \cdot \bm{\sigma})$   is  given by 
\begin{align}
	&\mathbb{M}_{\mathcal{E}} : \mathbf{I} \rightarrow \mathbf{I} + \textbf{t}\cdot \bm{\sigma}, \quad \sigma_j \rightarrow \eta_j \sigma_j, \label{eq:Map} \\
	&\mathbb{M}^\dagger_{\mathcal{E}} : \mathbf{I} \rightarrow \mathbf{I}, \quad \sigma_j \rightarrow  t_j \mathbf{I} + \eta_j \sigma_j. \label{eq:ConjMap}
\end{align}
\subsection{Equivalance of CHSH nonlocality breaking and incompatibility breaking channels}
Our first result establishes an \textit{equivalence} of the CHSH nonlocality breaking channel acting on one party, with  its \textit{dual} being an incompatibility breaking channel -- in the context of $\tibc$s. The result can be summarized by the following two theorems: 
\begin{theorem}
	If the conjugate of a qubit channel ${\rm \mathcal{E}}$  is $\tibc$ , then the channel itself is CHSH-NBC.
	\end{theorem}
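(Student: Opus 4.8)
The plan is to work in the Heisenberg picture through the duality~(\ref{eq:SHpictures}) and show directly that $(\id\otimes\mathcal{E})(\rho_{AB})$ obeys the Bell--CHSH inequality~(\ref{bell-ineq}) for \emph{every} bipartite state $\rho_{AB}$ and \emph{every} choice of dichotomic observables; by the definition of a CHSH--NBC this is precisely the conclusion. For $\hat{\Bb}=\hat{A}_1\otimes(\hat{B}_1+\hat{B}_2)+\hat{A}_2\otimes(\hat{B}_1-\hat{B}_2)$ with $\|\hat{A}_i\|\le1$, $\|\hat{B}_j\|\le1$, write $E_j:=\mathcal{E}^{*}(\hat{B}_j)$; then $\Tr[(\id\otimes\mathcal{E})(\rho_{AB})\,\hat{A}_i\otimes\hat{B}_j]=\Tr[\rho_{AB}\,\hat{A}_i\otimes E_j]$ gives
\begin{equation*}
\Tr\left[(\id\otimes\mathcal{E})(\rho_{AB})\,\hat{\Bb}\right]=\Tr\left[\rho_{AB}\big(\hat{A}_1\otimes(E_1+E_2)+\hat{A}_2\otimes(E_1-E_2)\big)\right],
\end{equation*}
so the whole effect of the channel is carried by the distorted observables $E_1,E_2$ (in general unsharp and, for a non-unital $\mathcal{E}$, biased) on Bob's side.

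Next I would use the hypothesis. Since the conjugate channel $\mathcal{E}^{*}$ is $\tibc$, the pair $E_1,E_2$ is jointly measurable (the $\tfrac12(\id\pm E_j)$ are legitimate qubit effects because $\mathcal{E}^{*}$ is positive and unital), so there is a single four--outcome POVM $\{G(i,j)\}_{i,j=\pm1}$ with $\sum_{j}G(i,j)=\tfrac12(\id+iE_1)$ and $\sum_{i}G(i,j)=\tfrac12(\id+jE_2)$. Hence $E_1\pm E_2=\sum_{i,j}(i\pm j)G(i,j)$, and substituting into the identity above expresses the Bell value as $\sum_{i,j}\big[(i+j)\Tr[\rho_{AB}\,\hat{A}_1\otimes G(i,j)]+(i-j)\Tr[\rho_{AB}\,\hat{A}_2\otimes G(i,j)]\big]$. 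Two elementary facts then finish the argument: $|i+j|+|i-j|=2$ for all $i,j\in\{\pm1\}$, and $|\Tr[\rho_{AB}\,\hat{A}_k\otimes G(i,j)]|\le\Tr[\rho_{AB}\,\id\otimes G(i,j)]=:p_{ij}$ (from $\|\hat{A}_k\|\le1$ and $G(i,j)\ge0$), with $\sum_{i,j}p_{ij}=1$; therefore
\begin{equation*}
\left|\Tr\left[(\id\otimes\mathcal{E})(\rho_{AB})\,\hat{\Bb}\right]\right|\le\sum_{i,j=\pm1}\big(|i+j|+|i-j|\big)p_{ij}=2\sum_{i,j=\pm1}p_{ij}=2.
\end{equation*}

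Since $\rho_{AB}$ and the observables were arbitrary, $\mathcal{E}$ is CHSH--NBC; the same computation goes through verbatim if the $\hat{A}_i,\hat{B}_j$ are replaced by arbitrary two--outcome POVMs on each side. I do not expect a genuine obstacle in this direction: the one point that needs care is the reduction just used --- that the $\tibc$ property, formulated for all input observables, indeed applies to the images $E_j$ (automatic, since $\mathcal{E}^{*}$ is positive and unital, so that Eq.~(\ref{eq:IBxeta}), or its unbiased case~(\ref{eq:coplanar}), is exactly the relevant joint--measurability condition) and that the joint POVM can be taken with $\{\pm1\}$--valued marginals. The genuinely harder direction is the converse (Theorem~2): there one starts from a pair that $\mathcal{E}^{*}$ fails to make compatible and must exhibit an explicit input state --- a maximally entangled one suffices --- whose image under $\id\otimes\mathcal{E}$ violates~(\ref{bell-ineq}), and that is where the Pauli parametrization $\mathbb{M}_{\mathcal{E}}=[1,\bm{0};\textbf{t},\textbf{T}]$ and a Horodecki-type criterion ($\eta_{(1)}^{2}+\eta_{(2)}^{2}>1$, with $\eta_{(1)},\eta_{(2)}$ the two largest of $|\eta_1|,|\eta_2|,|\eta_3|$) enter.
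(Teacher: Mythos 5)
Your argument is correct, and its skeleton is the same as the paper's: both proofs use the duality~(\ref{eq:SHpictures}) to move the channel off the state and onto one party's observables, and then invoke the fact that a pair of jointly measurable dichotomic observables cannot produce a CHSH violation (that you dress Bob rather than Alice with the noise is immaterial). The difference lies in how that key fact is handled. The paper simply asserts it, citing \cite{wolf2009measurements}, and moreover phrases compatibility as commutation, $[\mathcal{E}^\dagger[A_1],\mathcal{E}^\dagger[A_2]]=0$, which is imprecise for unsharp POVMs; you instead prove the local bound from scratch: you take the joint observable $G(i,j)$ guaranteed by the $\tibc$ hypothesis, write $E_1\pm E_2=\sum_{i,j}(i\pm j)G(i,j)$, and use $|i+j|+|i-j|=2$ together with $|\Tr[\rho_{AB}\,\hat{A}_k\otimes G(i,j)]|\le\Tr[\rho_{AB}\,\id\otimes G(i,j)]$ to get the bound $2$ directly. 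This buys you a self-contained and slightly more general statement (it works verbatim for arbitrary two-outcome POVMs and places no dimension restriction on the untouched party), at the cost of redoing a standard lemma the paper is content to quote. One small remark on your closing comment: the paper's proof of the converse (Theorem~2) is not the explicit Horodecki-criterion construction you anticipate, but another short duality argument of the same informal type; your sketch of what a rigorous converse would require is in fact more demanding than what the paper supplies.
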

	\begin{proof}
		Consider the Bell-CHSH inequality given in (\ref{bell-ineq}), such that  $[A_{1}, A_{2}] \ne 0$ and   $[B_{1}, B_{2}] \ne 0$, i.e., the operators $A_1$ and  $A_2$ and $B_1$ and  $B_2$ are incompatible in \textit{conjunction}.   Let   $\mathcal{E}^\dagger $ be the conjugate channel that is $\tibc$.  Then the action of this channel on Alice's side makes $A_1$ and $A_2$ compatible, i.e., $[A_{1}, A_{2}]=0$.  Therefore, the Bell-CHSH inequality is not violated \cite{wolf2009measurements}, and we have
		\begin{equation}
 	  \Trace\left[\rho \Bb\left(\mathcal{E}^\dagger[A_1],\mathcal{E}^\dagger[A_2], B_{1},B_{2}\right) \right]\leq 2.
 	  \end{equation}
   Alternatively, 
   \begin{align}
   &	\Trace\left[\rho \mathcal{E}^\dagger[A_1] \otimes  B_{1} \right]+\Trace \left[\rho \mathcal{E}^\dagger[A_2] \otimes B_{1} \right] \nonumber  \\& +\Trace \left[\rho \mathcal{E}^\dagger[A_1] \otimes B_{2} \right]-\Trace \left[\rho \mathcal{E}^\dagger[A_2] \otimes  B_{2} \right]\leq 2.
   \end{align}
   This can be viewed in the Schr{\"o}dinger picture as 
   \begin{align}
		     &\Trace\left[(\mathcal{E} \otimes \id)[\rho]  A_1 \otimes B_{1} \right]+\Trace\left[(\mathcal{E} \otimes \id)[\rho]  A_2 \otimes B_{1} \right] \nonumber \\&  + \Trace\left[(\mathcal{E} \otimes \id)[\rho]  A_1 \otimes B_{2} \right]-\Trace\left[(\mathcal{E} \otimes \id)[\rho]  A_2 \otimes B_{2} \right]\leq 2,
		\end{align}
	  which tells us that the CHSH inequality is satisfied even when  operators $A_1$ and $A_2$ and $B_1$ and $B_2$ are incompatible in \textit{conjunction}. Therefore, the  action of $\mathcal{E}$ (to be precise of $\mathcal{E} \otimes \id$) on state $\rho$ is solely responsible for nonviolation of the CHSH inequality. We conclude that $\mathcal{E}$ is CHSH-NBC.
		\end{proof}
\begin{theorem}
 If a qubit channel $\mathcal{E}$  is CHSH-NBC, then its conjugate is  $\tibc$.
\end{theorem}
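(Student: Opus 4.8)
The plan is to establish the contrapositive, dualising the argument used for Theorem~1. Assume $\mathcal{E}^\dagger$ is \emph{not} $\tibc$. Then there is a pair of qubit observables $A_1,A_2$ whose images $\mathcal{E}^\dagger[A_1]$ and $\mathcal{E}^\dagger[A_2]$ --- which are in general the unsharp, biased dichotomic effects described by Eq.~(\ref{eq:observable}) --- are incompatible. The crucial input is the converse half of the Wolf \emph{et al.}\ equivalence \cite{wolf2009measurements}: a pair of dichotomic qubit POVMs is incompatible if and only if there exist a bipartite state $\sigma$ and observables $B_1,B_2$ on the second subsystem such that the Bell--CHSH inequality~(\ref{bell-ineq}) is violated. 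Applying this to the incompatible pair $\mathcal{E}^\dagger[A_1],\mathcal{E}^\dagger[A_2]$ yields such a $\sigma,B_1,B_2$ with
\begin{equation*}
  \Trace\!\left[\sigma\,\Bb\!\left(\mathcal{E}^\dagger[A_1],\mathcal{E}^\dagger[A_2],B_1,B_2\right)\right]>2 .
\end{equation*}
I would then move $\mathcal{E}^\dagger$ off the observables and onto the state through the duality~(\ref{eq:SHpictures}) applied term by term, $\Trace[\sigma\,(\mathcal{E}^\dagger[A_i]\otimes B_j)]=\Trace[(\mathcal{E}\otimes\id)[\sigma]\,(A_i\otimes B_j)]$, so that the inequality becomes $\Trace[(\mathcal{E}\otimes\id)[\sigma]\,\Bb(A_1,A_2,B_1,B_2)]>2$. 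Hence $(\mathcal{E}\otimes\id)[\sigma]$ is CHSH-nonlocal and $\mathcal{E}$ is not CHSH-NBC, which is precisely the contrapositive of the claim. This route uses nothing about the canonical form of $\mathcal{E}$, so it applies to arbitrary (not necessarily unital) qubit channels.

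For unital $\mathcal{E}$ one can also give a fully explicit verification that identifies the channel class involved. Write $\mathcal{E}$ in the Pauli representation with $\mathbf{T}=\mathrm{diag}(\eta_1,\eta_2,\eta_3)$, ordered so that $|\eta_1|\ge|\eta_2|\ge|\eta_3|$. By the Pal--Ghosh reduction \cite{pal2015non} it suffices to test $\mathcal{E}$ on maximally entangled states, and the Horodecki CHSH criterion then gives that $\mathcal{E}$ is CHSH-NBC iff $\eta_1^2+\eta_2^2\le 1$. On the incompatibility side, by Eq.~(\ref{eq:ConjMap}) the conjugate sends $\hat a\cdot\vec\sigma\mapsto(\mathbf{T}\hat a)\cdot\vec\sigma$, so $\mathcal{E}^\dagger$ is $\tibc$ iff the unbiased joint-measurability bound~(\ref{eq:coplanar}), $|\mathbf{T}\hat a+\mathbf{T}\hat b|+|\mathbf{T}\hat a-\mathbf{T}\hat b|\le 2$, holds for all unit $\hat a,\hat b$. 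Setting $\vec p=\hat a+\hat b$ and $\vec q=\hat a-\hat b$ (so $\vec p\perp\vec q$ and $|\vec p|^2+|\vec q|^2=4$), restricting $\mathbf{T}$ to the plane spanned by $\vec p,\vec q$ with singular values $\mu_1\ge\mu_2$, and applying Cauchy--Schwarz in the radial variables gives $|\mathbf{T}\vec p|+|\mathbf{T}\vec q|\le 2\sqrt{\mu_1^2+\mu_2^2}$; Cauchy interlacing then gives $\mu_1^2+\mu_2^2\le\eta_1^2+\eta_2^2$, with equality attained on the $\hat e_1\hat e_2$-plane. Thus the worst case of~(\ref{eq:coplanar}) is again $\eta_1^2+\eta_2^2\le 1$, so the CHSH-NBC and $\tibc$ conditions coincide.

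The step I expect to be the main obstacle in the first route is the status of the Wolf \emph{et al.}\ equivalence for the \emph{unsharp, biased} effects $\mathcal{E}^\dagger[A_i]$: one must invoke the general POVM version of their result rather than its textbook projective form, and should check that the CHSH-violating state can be taken in a form admissible in the NBC definition (a bipartite state with $\mathcal{E}$ acting on the qubit wing). In the explicit route the delicate point is the optimisation of the left-hand side of~(\ref{eq:coplanar}) over all measurement directions --- in particular justifying the reduction to a coordinate plane and showing that the orientation-dependent angular factors combine to $\mu_1^2+\mu_2^2$ independently of that orientation --- together with the fact that dropping unitality reintroduces the bias terms $t_j$ of~(\ref{eq:ConjMap}) and forces one back to the general condition~(\ref{eq:IBxeta}), which is why the clean parametric statement is really confined to unital channels.
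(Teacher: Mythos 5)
Your first route is essentially the paper's own argument, just run in contrapositive form: the paper also assumes the CHSH bound holds for every input state, shifts the channel onto Alice's observables via the duality~(\ref{eq:SHpictures}), and then invokes the Wolf \emph{et al.}\ equivalence \cite{wolf2009measurements} to conclude that $\mathcal{E}^\dagger[A_1]$ and $\mathcal{E}^\dagger[A_2]$ must be jointly measurable. Your version is in fact tighter on two points where the paper is loose: you state explicitly that the POVM-level (unsharp, biased) form of the Wolf \emph{et al.}\ result is what is needed, and you phrase the conclusion as joint measurability rather than vanishing commutators, whereas the paper writes $\left[\mathcal{E}^\dagger[A_1],\mathcal{E}^\dagger[A_2]\right]=0$, which conflates compatibility of unsharp observables with commutativity. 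Your second, explicit route (Pal--Ghosh reduction to maximally entangled states, the Horodecki-type condition $\eta_1^2+\eta_2^2\le 1$, and the optimisation of the unbiased joint-measurability bound~(\ref{eq:coplanar}) over directions) does not appear in the paper's proof; it buys a concrete, checkable identification of the channel class for unital $\mathcal{E}$, at the cost of being confined to the unital case, and it is consistent with the paper's later observation that the equivalence holds only when the bias terms $t_j$ of~(\ref{eq:ConjMap}) vanish. No gap in the main argument.
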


	\begin{proof}
	Here we start with incompatible operators associated with the respective subsystems, $[A_1,A_2] \ne 0$ and $[B_1,B_2] \ne 0$ and assume that the channel $\mathcal{E}$ acting on Alice's side does not allow for the violation of the  CHSH inequality, that is
	\begin{equation}
		\Trace\left[\left( \mathcal{E} \otimes \id \right) [\rho] \Bb\left( A_{1},A_{2}, B_{1}, B_{2} \right) \right]\leq 2.
	\end{equation}
 In other word, looking from the measurement point of view, in the Heisenberg picture,  we have
	\begin{align}
 &\Trace\left[\rho \mathcal{E}^\dagger [A_{1}]\otimes B_1 \right]+\Trace\left[\rho \mathcal{E}^\dagger[A_{1}]\otimes B_2 \right]\nonumber \\&+\Trace\left[\rho \mathcal{E}^\dagger [A_{2}]\otimes B_1 \right]-\Trace[\rho \mathcal{E}^\dagger [A_{2}] \otimes B_2]\leq 2
	\end{align}
 The above inequality holds for arbitrary  state $\rho$ which can even be an entangled state. Thus the nonviolation of the CHSH inequality is coming from the action of $\mathcal{E}^\dagger $ on the operators $A_1$ and $A_2$, making them compatible,  $\left[ \mathcal{E}^\dagger [A_1], \mathcal{E}^\dagger [A_2] \right] =0$. We conclude that  $(\mathcal{E}^\dagger \otimes \id)$ is incompatibility breaking. 
	\end{proof}

In order to verify the above  results, let us  consider the CHSH inequality (\ref{bell-ineq}), such that the local observables are subject to some biased noise characterized by $(x_a, \eta_a)$ and $(x_b, \eta_b)$. As a result, the observables are modified as  $A_k =  \mathbf{I} x_a + \eta_a  \hat{a}_k \cdot \vec{ \sigma}_k$,  $B_k =  \mathbf{I} x_b + \eta_b  \hat{b}_k \cdot \vec{\sigma}_k$, and  $k=1,2$.  Averaging with respect to the singlet-state, the  CHSH inequality is satisfied if
\begin{align}
	2 x_a x_b+ \eta_a \eta_b \left| \cos\theta_{11} + \cos\theta_{12} + \cos\theta_{21} - \cos \theta_{22} \right| \le 2.
\end{align}
  The modulus term has a maximum value of  $2 \sqrt{2}$ for $\theta_{ab} = \theta_{a^\prime b} = \theta_{ab^\prime} = \theta$ and $\theta_{a^\prime b^\prime} = 3 \theta$ with $\theta = \pi/4$. The above inequality becomes \cite{KAR199512}
\begin{equation}
	\eta_a \eta_b  \le \frac{1 - x_a x_b}{\sqrt{2}}.
\end{equation}
For the unbiased noise on  \textit{either side},  i.e., $x_a=0$ or $x_b=0$, the incompatibility breaking condition   reduces to 
\begin{equation}
   \eta_a \eta_b  \le \frac{1}{\sqrt{2}}.
\end{equation}
Further, if this unbiased noise is acting only on one side (for example, on $A$) of the bipartite system, then $\eta_b = 1$, and we have 
\begin{equation}\label{eq:CHSHnlb}
	\eta_a   \le \frac{1}{\sqrt{2}}.
\end{equation}
 Having seen the impact of noise on the violation of the CHSH inequality, we now look at the incompatibility breaking condition for observables $A_1$ and  $A_2$ subjected to the same noise. As a result we have $A_1^{(x_a, \eta_a)}$ and $A_2^{(x_a, \eta_a)}$ as biased and unsharp observables. Using inequality (\ref{eq:IBxeta}), the incompatibility condition for these observables reads 
\begin{equation}
	\eta_a  \le \frac{1 - x_a^2}{\sqrt{2}}.
\end{equation}
This  coincides with the NB condition (\ref{eq:CHSHnlb}) only if $x_a = 0$. Since the action of  the nonunital channel on a projector results in a  biased observable \cite{jn2021}, one finds that the conditions for nonlocality breaking  and incompatibility breaking agree as long as the dynamics is unital in the  Schr{\"o}dinger picture. \bigskip 

\subsection{Nonlocality and incompatibility breaking channels in tripartite scenario}	
In the tripartite scenario, the  aforesaid relation between nonlocality and incompatibility does not hold in general. In this section, we therefore first obtain the NB condition for some well known tripartite states and then identify the range of channel parameter where NB agrees with the $2$-IBC condition. Let us first introduce the definition of the absolute and genuine tripartite NBC  as
\begin{definition}
	For any three-qubit state $\rho_{ABC}$, a given qubit channel $\mathcal{E}$ is said to be  absolute or  genuine \textit{nonlocality breaking} if acting on any qubit, it gives a state (for example) $\rho'_{ABC}=(\mathcal{E}\otimes I \otimes I)(\rho_{ABC})$, which satisfies Mermin inequality $\langle \hat{M} \rangle_{\rho'_{ABC}}\leq 2$ or  Svetlichny inequality $\langle \hat{S} \rangle_{\rho'_{ABC}}\leq 4$.
\end{definition}

 In terms of  the largest singular value in Definition \ref{def:SIQmax}, the Mermin nonlocality breaking condition (M-NBC) and Svetlichny nonlocality breaking condition (S-NBC) are, respectively,  given as
\begin{equation}
	\eta_M  \le \frac{1}{\sqrt{2} \lambda_{max} }, \qquad  \eta_S  \le  \frac{1}{\lambda_{max}}.
\end{equation}

	We now consider some well-known tripartite quantum states in which one party is subjected to a noisy evolution, and use Def. \ref{def:SIQmax} to obtain the conditions on the noise parameter for breaking the Mermin and Svetlichny nonlocality.\bigskip
	
\textbf{Example 1.} Our first example is the generalized GHZ state \cite{Dur2000,GHZ2007}
\begin{equation}\label{eq:GHZ}
|\psi_{GHZ} \rangle = \alpha | 000 \rangle + \beta |111 \rangle.
\end{equation}
The entanglement in this state is completely destroyed when any of the three qubits is traced out, i.e.,  $\Tr_k\left[ |\psi_{GHZ} \rangle \langle \psi_{GHZ}| \right] = \mathbf{I}/2$, with $k=A,B,C$.  Let us assume a unital noise channel  $\mathcal{E}$ acting on one qubit (for example, the first one)  according to Eq. (\ref{eq:Map}). This can be achieved by expressing $|\psi_{GHZ} \rangle $ in the  Pauli basis and invoking $\sigma_i \rightarrow \eta \sigma_i$  ($i=x,y,z$, $0 \le \eta \le 1)$ at the first qubit
\begin{align}
&\left(\mathcal{E} \otimes \mathbf{I} \otimes \mathbf{I} \right) \left[|\psi_{GHZ} \rangle \langle \psi_{GHZ}  | \right] \nonumber \\& =
\frac{1}{8}
\Big[ \mathbf{I} \otimes \mathbf{I} \otimes \mathbf{I} + \mathbf{I} \otimes \sigma_3 \otimes \sigma_3 +\eta( \sigma_3 \otimes \mathbf{I} \otimes\sigma_3)\nonumber\\&
+ (\alpha^2 -\beta^2)(\eta\sigma_3 \otimes \mathbf{I} \otimes \mathbf{I} + \mathbf{I} \otimes \sigma_3 \otimes \mathbf{I}\nonumber\\&
+\mathbf{I} \otimes \mathbf{I} \otimes \sigma_3 +\eta \sigma_3 \otimes \sigma_3 \otimes \sigma_3) + 2\alpha\beta\eta (\sigma_1 \otimes \sigma_1\otimes \sigma_1\nonumber\\&
-\sigma_1 \otimes \sigma_2 \otimes \sigma_2 -\sigma_2 \otimes \sigma_1 \otimes \sigma_2 -\sigma_2 \otimes \sigma_2 \otimes \sigma_1) \Big]
\end{align}
The matrix  $M_{j, ik}$  in Def. \ref{def:SIQmax} turns out to be
\begin{align}
M_{j,ik} =
\small \begin{pmatrix}
2\eta \alpha\beta & 0  & 0 & 0 &  -2\eta \alpha\beta & 0 & 0 & 0 & 0\\
0 & -2\eta \alpha\beta  & 0 & -2\eta \alpha\beta & 0  & 0 & 0 & 0 & 0\\
0 & 0 & 0 & 0 & 0 & 0  & 0 & 0 & \eta(\alpha^2-\beta^2)
\end{pmatrix}.\nonumber\\ 
\end{align}
\normalsize
This has singular values $2\sqrt{2}\eta\alpha\beta$, $2\sqrt{2}\eta\alpha\beta$, and $\eta \left| \alpha^2 - \beta^2 \right|$.
Thus the condition for the channel to be Mermin and Svetlichny NB is given by,
\begin{eqnarray}\label{eq:ghzNB1}
\eta_M  \leq \frac{1}{4\alpha\beta}, \qquad \eta_S  \leq \frac{1}{2\sqrt{2}\alpha\beta}.
\end{eqnarray} 	

\textbf{Example 2.} 
Next, we consider the well-known generalized W-state given by
\begin{equation}\label{eq:Wstate}
	|\psi\rangle=\alpha |100\rangle + \beta |010\rangle + \gamma |001\rangle,
\end{equation}
with $\alpha, \beta$, and $\gamma$ real and $\alpha^2 + \beta^2 + \gamma^2 =1$. This state is special in the sense that if one qubit is lost, the state of the  remaining two qubits is still entangled, unlike the GHZ state. The matrix  $M_{j, ik}$  in Def.\ref{def:SIQmax} corresponding to unital noise acting on the first qubit of the state is given by 
\begin{align}
	M_{j,ik}= 
	\begin{pmatrix}
		0 & 0  & \eta \omega & 0 & 0 & 0  & \eta \omega & 0 & 0\\
		0 & 0 & 0 & 0 & 0 &  \eta \omega &0 & \eta \omega & 0\\
		\eta \omega^\prime  & 0 & 0 & 0 &\eta \omega^\prime  & 0  & 0 & 0 & \eta(2\alpha\gamma-\beta^2)
	\end{pmatrix},\nonumber
\end{align} 
with $\omega = \alpha\beta+\beta\gamma$, $\omega^\prime = (\alpha^2+\gamma^2)$. 
 The largest singular value turns out to be $\lambda = \eta \sqrt{1 + 8 \beta^2 \gamma^2 }$. Now $\beta^2 + \gamma^2 = 1- \alpha^2 = k$ (for example), so that $\beta^2 \gamma^2 = \beta^2 (k - \beta^2)$. This quantity attains the maximum $k^2/4$ at $\beta^2 = k/2$,  leading to  $\lambda_{max} = \eta \sqrt{1 + 2 k^2 }$. The Mermin and Svetlichny nonlocality breaking conditions then read
\begin{equation}\label{eq:etaW}
\eta_M \le \frac{1}{\sqrt{2} \sqrt{1 + 2k^2}}, \qquad	\eta_S \le \frac{1}{\sqrt{1 + 2k^2}}.
\end{equation}

The singular values attain a maximum at $k=2/3$, i.e., $\alpha = \beta = \gamma = 1/\sqrt{3}$. Under this condition and  for $\eta_S$ approximately in the range $ [0.707- 0.727] $, the channel is S-NBC but not $\tibc$. Note that in the bipartite case, a unital channel is CHSH-NBC for all $\eta\leq 1/\sqrt{2}$  ~ \cite{pal2015non}. Thus in the range $\eta\approx 0.707-0.727 $, it is S-NBC but not CHSH-NBC.\bigskip
                                          
\textbf{Example 3.} We next consider the three-qubit partially entangled set of maximal slice (MS) states \cite{Carteret2000,liu2016controlled}
\begin{equation}\label{eq:Slice}
	|\psi_{MS}\rangle = \frac{1}{\sqrt{2}} \bigg[|000\rangle+|11 \bigg(\alpha|0\rangle+ \beta|1\rangle \bigg) \bigg],
\end{equation}
 where $\alpha$ and $\beta$ are real with $\alpha^2 + \beta^2= 1$.  The matrix $M_{j,ik}$ corresponding to Eq. (\ref{eq:Slice}) after the action of unital noise on the first qubit is given by
 	\begin{align}
 	&M_{j,ik}&=
 	\begin{pmatrix}
 	\eta\beta & 0  & \eta\alpha & 0 &-\eta\beta &0  & 0 & 0 & 0\\
 	0 & -\eta\beta & 0 & -\eta\beta & 0 &  -\eta\alpha &0 & 0 & 0\\
 	0 & 0 & 0 & 0 & 0 & 0  & \eta\alpha\beta & 0 & \eta(\frac{1+\alpha^2-\beta^2}{2})
 	\end{pmatrix}\nonumber
 	\end{align} 
 with three singular values $\lambda_1=\eta\sqrt{\frac{1}{4} \left(\alpha ^2-\beta ^2+1\right)^2+ (\alpha  \beta )^2}$ and two equal singular values $\lambda_2= \lambda_3=\eta\sqrt{(\alpha^2+2\beta^2)}$ respectively, leading to the condition for Mermin and Svetlichny nonlocality breaking as
\begin{eqnarray}\label{eq:Slice1}
\eta_M \leq \frac{1}{\sqrt{2(\alpha^2+2\beta^2)}}, \qquad \eta_S \leq \frac{1}{\sqrt{(\alpha^2+2\beta^2)}}.
\end{eqnarray}
\bigskip

\begin{figure}
	\centering
	\includegraphics[width=70mm]{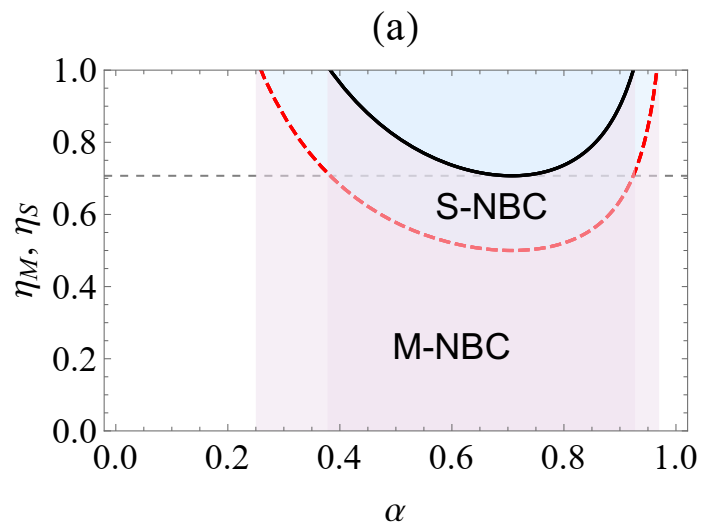} \\
	\includegraphics[width=70mm]{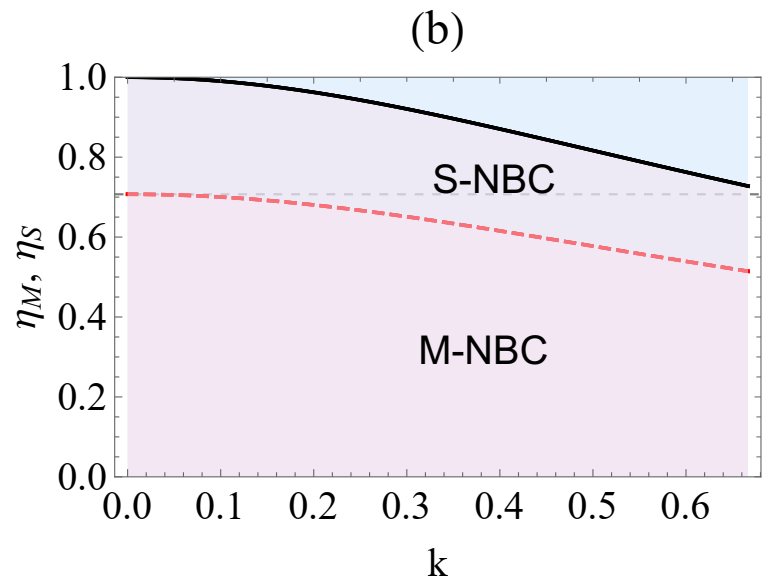}\\
	\includegraphics[width=70mm]{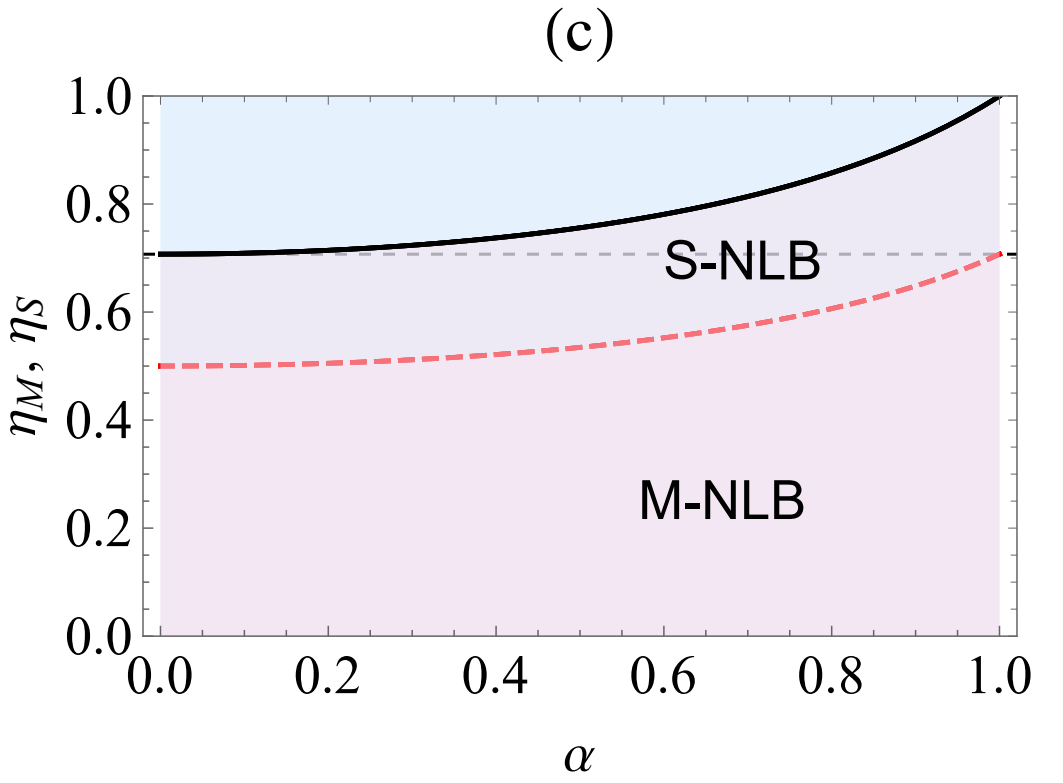}\\
	\includegraphics[width=70mm]{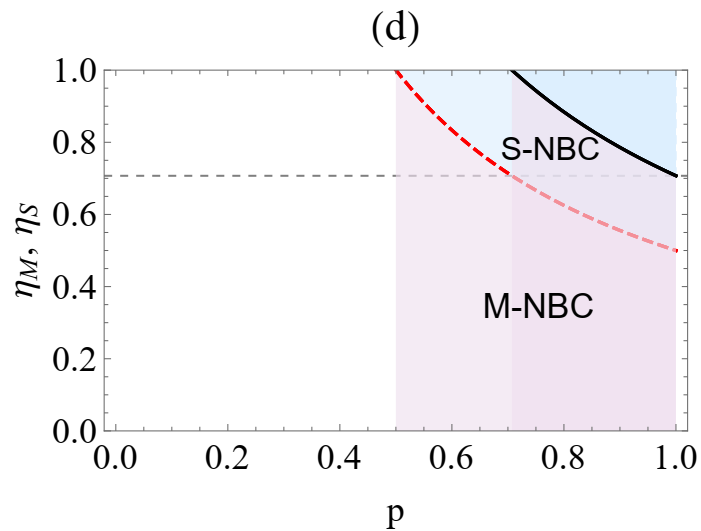}
	\caption{(Color online): The region below the dashed (red) and  solid (black)  curve in (a), (b), (c), and (d) corresponds to M-NBC and S-NBC given by Ineqs. (\ref{eq:ghzNB1}), (\ref{eq:etaW}), (\ref{eq:Slice1}), and (\ref{eq:etaGHZmix}), respectively, plotted against the (dimensionless) state coefficients.  The pairwise incompatibility breaking condition (\ref{pjm}) pertains to all points below the horizontal dashed line.}
	\label{fig:eta}
\end{figure}	
\bigskip

\textbf{Example 4.} Consider the quantum state \cite{Augusiak2015} 
\begin{equation}\label{eq:GHZmixed}
\sigma_{GHZ} =p|GHZ\rangle\langle GHZ|+(1-p) I_2 \otimes \tilde{I}.
\end{equation}
Here, $ |GHZ\rangle\langle = (|000\rangle + |111\rangle)/\sqrt{2}$,   $\tilde{I} = diag.(1, 0, 0, 1)$, is diagonal matrix, and  $0\leq p \leq 1$. In fact,  $\tilde{I} = |\Phi^{+} \rangle \langle \Phi^{+}| + |\Phi^{-} \rangle \langle \Phi^{-}|$, where $ \Phi^{\pm}$ are Bell states, is a separable state, and it tells us that $\sigma_{GHZ}$ has with probability $p$ the GHZ state and with probability $(1-p)$ the first qubit is left in the mixed state and the second and third qubit are in separable state  $\tilde{I}$.  The corresponding  $M_{j,ik}$ matrix with a unital noise acting on one qubit is given by 
\begin{eqnarray}
M_{j,ik} =
\begin{pmatrix}
p\eta & 0  & 0 & 0 & -p\eta & 0 & 0 & 0 & 0\\
0 & -p\eta & 0 & -p\eta & 0 & 0 & 0 & 0 & 0\\
0 & 0 & 0 & 0 & 0 & 0  & 0 & 0 & 0
\end{pmatrix},\nonumber
\end{eqnarray}
with  singular values $\lambda_1=\lambda_2=\sqrt{2}\eta p$. Thus the condition for the channel to be Mermin and Svetlichny NBC becomes 
\begin{eqnarray}
\label{eq:etaGHZmix}
\eta_M \leq \frac{1}{2 p}, \qquad \eta_S \leq \frac{1}{\sqrt{2}p}.
\end{eqnarray}
The conditions for Mermin and Svetlichny nonlocality breaking channel given by Ineqs. (\ref{eq:ghzNB1}), (\ref{eq:Wstate}), (\ref{eq:Slice1}), and  (\ref{eq:etaGHZmix}) obtained by the application of a unital quantum channel to one party of a tripartite system,  are depicted in Fig. \ref{fig:eta}. All the points below the solid (black) and dashed (red) curve correspond to the nonlocality breaking channel, while as the points below  the horizontal dashed line, $\eta=1/\sqrt{2}$, pertain to pairwise incompatibility breaking. In all the four examples, Fig.~ \ref{fig:eta} (a)-(d), the minimum value of $\eta_S$ for which SI is violated  is  $1/\sqrt{2}$, suggesting that \emph{genuine} nonlocal correlations can not be established if at least one pair of observable is compatible. The converse is not true, since there exist regions (above the horizontal dashed line and below the solid (black) curve) of Svetlichny nonlocality breaking even when the channel is not $\tibc$. Thus, these examples illustrate that \textit{corresponding to $2$-IBC the conjugate channels  are definitely S-NLB; however, the conjugate of S-NB channels may not necessarily be a $2$-IBC}. However, in the context of Mermin nonlocality, even the first statement does not hold, that is, \textit{existence of a $2$-IBC does not necessarily guarantee a conjugate channel that is M-NB.} Also, the minimum $\eta_M$ (that is maximum noise) for which a channel is M-NB  is always less by a factor of $1/\sqrt{2}$ than the minimum noise below which that channel is S-NB.
 It is worth pointing  here to  Fig.~ \ref{fig:eta} (c),  which  illustrates  that if a (unital) channel breaks the Svetlichny nonlocality for the GHZ state (which in fact violates the SI maximally) then it also does so for the mixture (\ref{eq:GHZmixed})  for $1/\sqrt{2}< p < 1$. This is unlike the bipartite scenario where a unital channel that breaks the CHSH nonlocality for the maximally entangled states is guaranteed to do so for all other states \cite{pal2015non}.  Summarizing, it is clear from the above examples that the existence of M-NBC or S-NBC does not guarantee the existence of a conjugate $2$-IBC, unlike the CHSH scenario. In particular, with the (unital) noise acting on one party of the $W$-state, $|\psi\rangle=(|100\rangle + |010\rangle + |001\rangle$)/$\sqrt{3}$, there exists a range of the channel parameter $\eta \in (1/\sqrt{2},3/\sqrt{17})$, where the channel is S-NBC but not CHSH NBC as depicted in the Fig. ~\ref{figsinbc}. Thus the Bell-CHSH inequality seems to be more suitable for a study of the incompatibility of observables than the multi-partite Bell-type inequalities.\bigskip

 Note that instead of one party, if two or all the three parties are subjected to noise, the conditions (\ref{eq:ghzNB1}), (\ref{eq:etaW}),  (\ref{eq:Slice1}), and (\ref{eq:etaGHZmix}), become
 \begin{equation}\label{eq:etaN}
 		\eta_M \le 	   \left( \frac{1}{\sqrt{2}\lambda_{max}}\right)^{1/n}, \qquad \eta_S \le 	\left( \frac{1}{\lambda_{max}}\right)^{1/n},
 \end{equation}
where $n$ corresponds to the number of qubits subjected to noise.  Since $1/\lambda_{max} < (1/\lambda_{max})^{1/2} < (1/\lambda_{max})^{1/3}$ (with $\lambda_{max} > 1$), the solid (black) and dashed (red) curves in  Fig.  \ref{fig:eta} (a)-(d)  are shifted up, thereby decreasing the region of nonlocality with increase in $n$.\bigskip

 \textit{General three qubit state subjected to general unital noise:} 
The (unital) noise acting on a single party considered in the above analysis involving the tripartite system assumes identical effects on $\sigma_x$, $\sigma_y$, and $\sigma_z$ corresponding to that party, in the sense that $\sigma_k \rightarrow \eta \sigma_k$ for all $k = x, y, z$. A more general transformation would take the particular party's $\sigma_k \rightarrow \eta_k \sigma_k$,  with $\sqrt{\eta_x^2 + \eta_y^2 + \eta_z^2} = \eta$, and $0 \le \eta \le 1$, such that 
	\begin{equation}
		\Phi\left( \frac{ \id + \vec{r} \cdot \vec{\sigma}}{2}\right) = \frac{\id + (\textbf{T}\vec{r}) \cdot \vec{\sigma}}{2},
	\end{equation}
where $\textbf{T} = {\rm diag.} [\eta_x, \eta_y, \eta_z]$, is a real diagonal matrix. The map $\Phi$ is completely positive for \cite{king2001minimal}
	\begin{equation}\label{eq:CP}
		\left| \eta_x \pm \eta_y \right| \le \left| 1 \pm \eta_z \right|,
	\end{equation}
	    which is a set of four inequalities and defines a tetrahedron in $\eta_{x} -\eta_{y} -\eta_{z}$ space. Under such a transformation, the singular values for the GHZ in (\ref{eq:GHZ}) and $W$ state in (\ref{eq:Wstate}) are, respectively, given by 
\begin{align}
&  \left(  2\sqrt{2} \alpha \beta \eta_x, 2\sqrt{2} \alpha \beta \eta_y, \left| \alpha^2 - \beta^2 \right| \eta_z \right), \\ {\rm and}~~\nonumber \\
	& \left( 2 \alpha \sqrt{\beta^2 + \gamma^2} \eta_x, 2 \alpha \sqrt{\beta^2 + \gamma^2} \eta_y, \sqrt{1 + 8 \beta^2 \gamma^2} \eta_z\right).
\end{align}
with the three singular values depending linearly on the respective noise parameters. Depending on which singular value is the largest, one can draw similar conclusions about nonlocality and incompatibility breaking properties of the noise channel as  in the case with uniform noise action. However, for a general three-qubit input state, the dependence of the singular values on noise parameters tuns out to be complicated leading to different conclusions regarding the nonlocality breaking property of such a channel. Let us consider the general situation where one would like to make a statement about the limiting noise beyond which no Mermin or Svetlichny nonlocal correlations can be established \emph{irrespective} of the state chosen. In this direction, we make use of the canonical five term decomposition of the three-qubit state \cite{acin2001three}: 
\begin{equation}
	\ket{\psi} = \lambda_0 \ket{000} + \lambda_1 e^{i\phi}  \ket{100} + \lambda_2 \ket{101} + \lambda_3 \ket{110} + \lambda_4 \ket{111}, 
\end{equation}
with $\lambda_i$ and $ \phi $ real parameters and $\sum_i \left| \lambda_i \right|^{2}= 1$ and $0 < \phi < \pi$. In the Pauli basis, one may denote the density matrix corresponding to $\ket{\psi}$ as $|\psi\rangle \langle \psi | \left[ \sigma_i \otimes \sigma_j \otimes \sigma_k\right]$ with Pauli matrices $\sigma_i$ $i=x,y,z$.  If a unital noise acts on one party (for example, the first) of such a state, we would have
\begin{align}\label{eq:SVAcin}
|\psi\rangle \langle \psi | \left[\sigma_i \otimes \sigma_j \otimes \sigma_k \right] \rightarrow |\psi\rangle \langle \psi | \left[ \eta_i  \sigma_i \otimes \sigma_j \otimes \sigma_k \right].
\end{align}

\begin{figure}
	\centering
	\includegraphics[width=90mm]{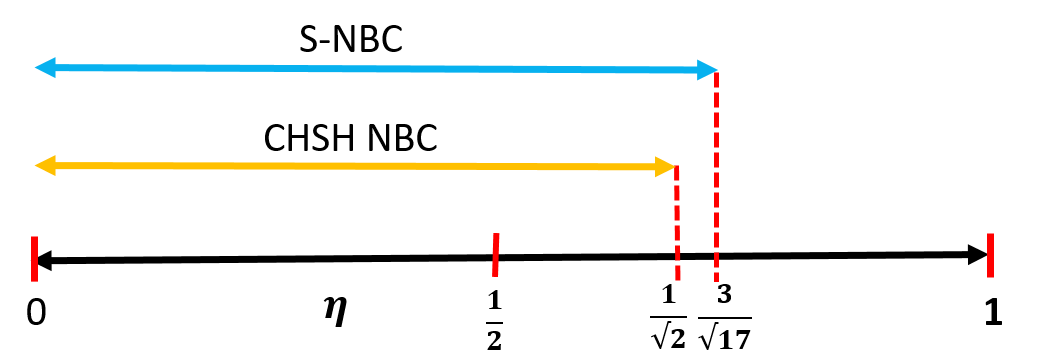} \\
	\caption{(Color online): Depicting the range of channel parameter $\eta \in  (\frac{1}{\sqrt{2}},\frac{3}{\sqrt{17}})$  in which it is S-NBC (for the $W$ state) but not CHSH-NBC.}
	\label{figsinbc}
\end{figure}

\begin{figure}
	\centering
	\includegraphics[width=82mm]{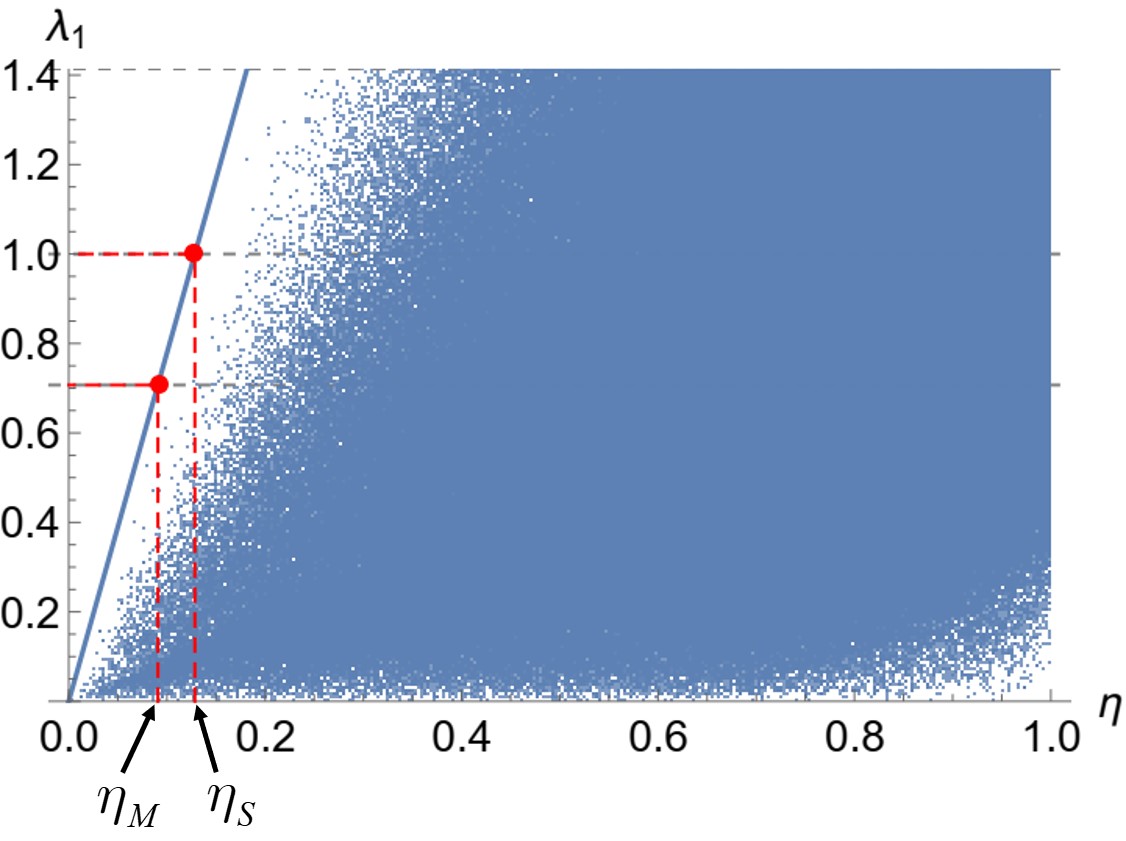}\\ \vspace{4mm}
	\includegraphics[width=82mm]{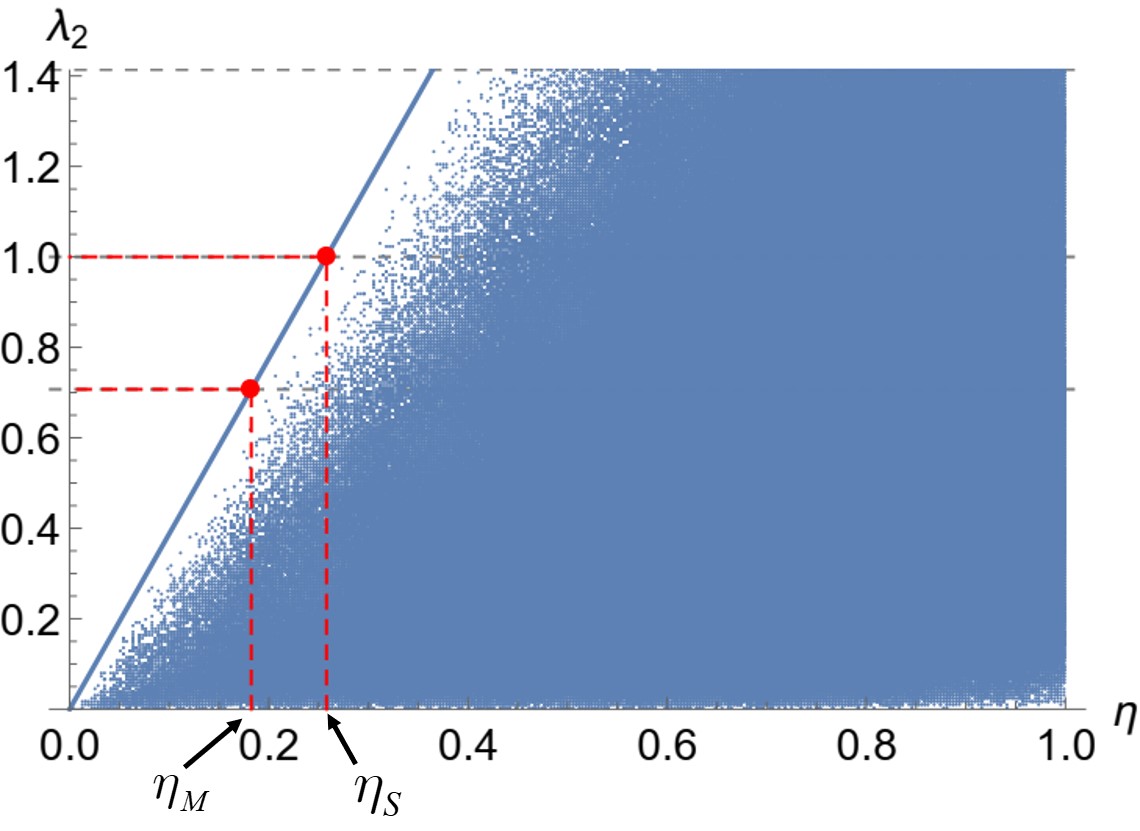}\\  \vspace{4mm}
	\includegraphics[width=82mm]{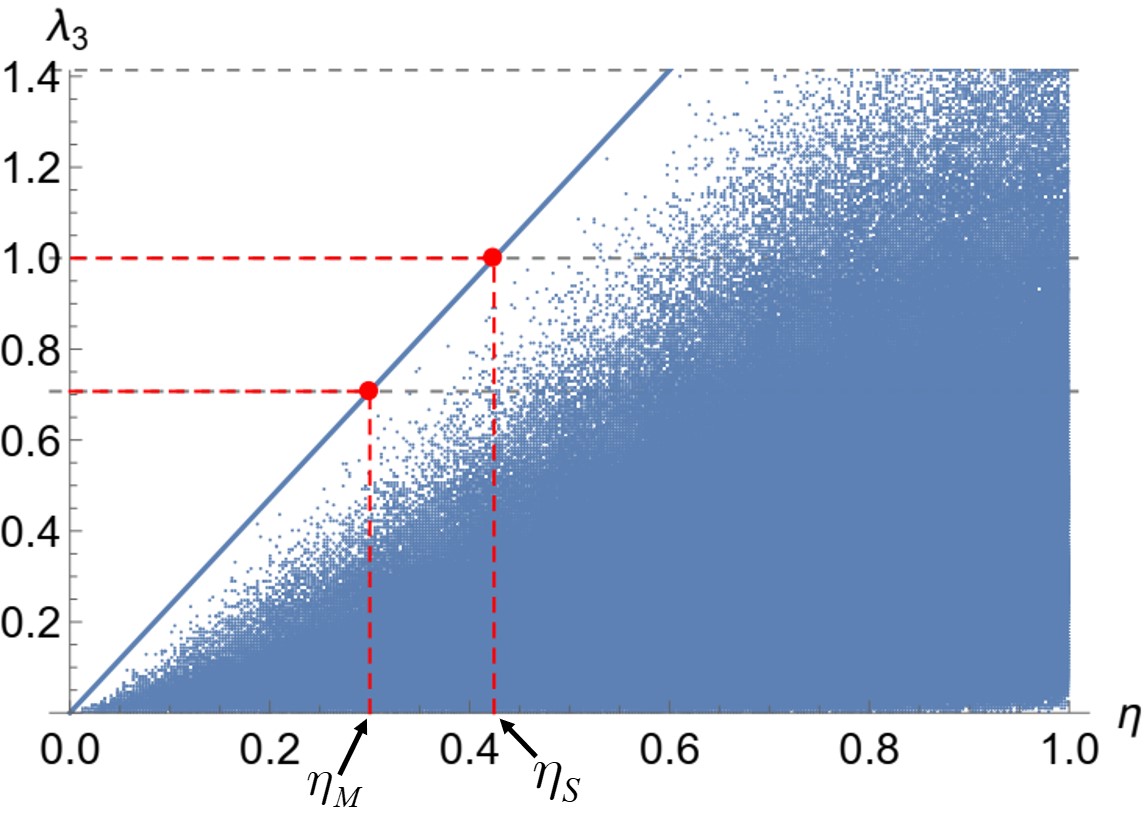}	
	\caption{(Color online) Singular values corresponding to a sample of $5\times 10^6$ randomly generated states of the form $(\ref{eq:SVAcin})$ plotted with respect to the dimensionless parameter $\eta = \sqrt{\eta_{x}^2 + \eta_{y}^2 + \eta_{z}^2}$.  In terms of these singular values, the conditions for  establishing the  Mermin and Svetlichny nonlocal correlations are given in Def. \ref{def:SIQmax} which  demands that the largest singular value be greater than $1/\sqrt{2}$ and $1$, respectively. The corresponding bound on the unsharpness parameters $\eta_M$ and $\eta_S$ trivially follow  and are summarized in the text. }
	\label{fig:SV}
\end{figure}

 We can now check the nonlocality breaking properties of such a general noise channel using Def. \ref{def:SIQmax} based on singular values of matrix  $M$.  For the above state, the singular values calculated according to Def. \ref{def:SIQmax}, are plotted in Fig. \ref{fig:SV} with respect to $\eta = \sqrt{\eta_x^2 + \eta_y^2 + \eta_z^2}$. The simulation makes use of  a sample of $5\times 10^6$ randomly generated states and the corresponding noise parameters $\eta_i$ subjected to  $0\le \sqrt{\eta_x^2 + \eta_y^2 + \eta_z^2} \le 1$ and also satisfying the completely positive condition  (\ref{eq:CP}).   According to the condition (\ref{sveq}), the Mermin  and Svetlichny nonlocal correlations are established only if the largest singular value is greater than $\frac{1}{\sqrt{2}}$ for the former and $1$ for the later. In Fig. \ref{fig:SV} the three singular values $\lambda_1, \lambda_2$, and $ \lambda_3$ are depicted with respect to parameter $\eta$. One finds that below a minimum $\eta_M$ and $ \eta_S$, the singular values do not exceed $\frac{1}{\sqrt{2}}$ and $1$, respectively. These values turn out to be $(\eta_M, \eta_S) = (0.090,0.128)$ for $\lambda_1$, $(\eta_M, \eta_S) = (0.182, 0.259)$ for $\lambda_2$, and $(\eta_M, \eta_S) = (0.300,0.409)$  for $\lambda_3$, and one may conclude that no Mermin (Svetlichny) nonlocal correlations are supported by  the unital noise channel if the noise parameter is below $0.090 ~(0.128)$, irrespective of the input state.

	\section{Conclusion}\label{sec:Concl}
	This paper is devoted to a study of the interplay between nonlocality breaking and incompatibility breaking power of noisy  quantum qubit channels. The action of quantum channels on projective measurements transforms them into noisy POVMs, characterized in particular by unsharpness parameters. As a consequence, noise tends to increase the compatibility of observers that are otherwise incompatible. In fact, pairwise incompatibility breaking is assured if the channel  parameter is less than or equal to $1/\sqrt{2}$. To be specific, we consider bipartite and tripartite scenarios, with CHSH nonlocality in the former and  Mermin and Svetlichny nonlocality in the later case. The degree of incompatibility breaking directly depends on the unsharpness parameters. Here, we showed that in the Bell-CHSH scenario, if the conjugate of a   channel is incompatibility breaking  then the channel is itself nonlocality breaking and the converse is also true. In the tripartite scenario, however, such an equivalence between nonlocality breaking and incompatibility breaking does not exist. We then consider various examples of three-qubit states and identify the state parameters for which the equivalence of nonlocality breaking corroborates with  the  pairwise incompatibility. In particular, it is illustrated that  the conjugate of incompatibility breaking channels is nonlocality breaking, however, the nonlocality breaking channels do not guarantee the existence of  conjugate channels that are incompatibility breaking.   This may be viewed as a useful feature of the Bell-CHSH inequality when it comes to the study incompatibility of observables.  Further, from  randomly generated three-qubit states subjected to general unital channels, we conclude that no Mermin (Svetlichny) nonlocal correlations are supported for $\eta_M <0.090$  ($\eta_S < 0.128$), $\eta_{M/S}$ being the channel unsharpness parameter.

	The channel activation of nonlocality in the CHSH scenario has 	been studied in  \cite{Zhang2020}. A future extension of this paper  could be the study of activation of Mermin and Svetlichny nonlocality under more general noise models and with general three-qubit input state.  This also invites for a detailed analysis on the hierarchy of nonlocality-breaking, steerability-breaking,  and entanglement-breaking quantum channels in the tripartite scenario. \bigskip    
	\section*{Acknowledgment}
	Authors thank Guruprasad Kar for the initial motivation of the work. SK acknowledges Yeong-Cherng Liang for the fruitful comments and the support from the National Science and Technology Council, Taiwan (Grants no.109-2112-M-006-010-MY3,110-2811-M-006-511,111-2811-M-006-013). JN’s work was supported by the Foundation for Polish Science within the “Quantum Optical Technologies” project carried out within the International Research Agendas programme cofinanced by the European Union under the European Regional Development Fund. AKP acknowledges the support from the research grant DST/ICPS/QuEST/2019/4.

\begin{thebibliography}{47}%
	\makeatletter
	\providecommand \@ifxundefined [1]{%
		\@ifx{#1\undefined}
	}%
	\providecommand \@ifnum [1]{%
		\ifnum #1\expandafter \@firstoftwo
		\else \expandafter \@secondoftwo
		\fi
	}%
	\providecommand \@ifx [1]{%
		\ifx #1\expandafter \@firstoftwo
		\else \expandafter \@secondoftwo
		\fi
	}%
	\providecommand \natexlab [1]{#1}%
	\providecommand \enquote  [1]{``#1''}%
	\providecommand \bibnamefont  [1]{#1}%
	\providecommand \bibfnamefont [1]{#1}%
	\providecommand \citenamefont [1]{#1}%
	\providecommand \href@noop [0]{\@secondoftwo}%
	\providecommand \href [0]{\begingroup \@sanitize@url \@href}%
	\providecommand \@href[1]{\@@startlink{#1}\@@href}%
	\providecommand \@@href[1]{\endgroup#1\@@endlink}%
	\providecommand \@sanitize@url [0]{\catcode `\\12\catcode `\$12\catcode
		`\&12\catcode `\#12\catcode `\^12\catcode `\_12\catcode `\%12\relax}%
	\providecommand \@@startlink[1]{}%
	\providecommand \@@endlink[0]{}%
	\providecommand \url  [0]{\begingroup\@sanitize@url \@url }%
	\providecommand \@url [1]{\endgroup\@href {#1}{\urlprefix }}%
	\providecommand \urlprefix  [0]{URL }%
	\providecommand \Eprint [0]{\href }%
	\providecommand \doibase [0]{http://dx.doi.org/}%
	\providecommand \selectlanguage [0]{\@gobble}%
	\providecommand \bibinfo  [0]{\@secondoftwo}%
	\providecommand \bibfield  [0]{\@secondoftwo}%
	\providecommand \translation [1]{[#1]}%
	\providecommand \BibitemOpen [0]{}%
	\providecommand \bibitemStop [0]{}%
	\providecommand \bibitemNoStop [0]{.\EOS\space}%
	\providecommand \EOS [0]{\spacefactor3000\relax}%
	\providecommand \BibitemShut  [1]{\csname bibitem#1\endcsname}%
	\let\auto@bib@innerbib\@empty
	\bibitem [{\citenamefont {Bell}(1964)}]{bell1964einstein}%
	\BibitemOpen
	\bibfield  {author} {\bibinfo {author} {\bibfnamefont {J.~S.}\ \bibnamefont
			{Bell}},\ }\href {\doibase 10.1103/PhysicsPhysiqueFizika.1.195} {\bibfield
		{journal} {\bibinfo  {journal} {Physics Physique Fizika}\ }\textbf {\bibinfo
			{volume} {1}},\ \bibinfo {pages} {195} (\bibinfo {year} {1964})}\BibitemShut
	{NoStop}%
	\bibitem [{\citenamefont {Brunner}\ \emph {et~al.}(2014)\citenamefont
		{Brunner}, \citenamefont {Cavalcanti}, \citenamefont {Pironio}, \citenamefont
		{Scarani},\ and\ \citenamefont {Wehner}}]{Brunner2014bell}%
	\BibitemOpen
	\bibfield  {author} {\bibinfo {author} {\bibfnamefont {N.}~\bibnamefont
			{Brunner}}, \bibinfo {author} {\bibfnamefont {D.}~\bibnamefont {Cavalcanti}},
		\bibinfo {author} {\bibfnamefont {S.}~\bibnamefont {Pironio}}, \bibinfo
		{author} {\bibfnamefont {V.}~\bibnamefont {Scarani}}, \ and\ \bibinfo
		{author} {\bibfnamefont {S.}~\bibnamefont {Wehner}},\ }\href {\doibase
		10.1103/RevModPhys.86.419} {\bibfield  {journal} {\bibinfo  {journal} {Rev.
				Mod. Phys.}\ }\textbf {\bibinfo {volume} {86}},\ \bibinfo {pages} {419}
		(\bibinfo {year} {2014})}\BibitemShut {NoStop}%
	\bibitem [{\citenamefont {Scarani}\ and\ \citenamefont
		{Gisin}(2001)}]{scarani2001}%
	\BibitemOpen
	\bibfield  {author} {\bibinfo {author} {\bibfnamefont {V.}~\bibnamefont
			{Scarani}}\ and\ \bibinfo {author} {\bibfnamefont {N.}~\bibnamefont
			{Gisin}},\ }\href {\doibase 10.1103/PhysRevLett.87.117901} {\bibfield
		{journal} {\bibinfo  {journal} {Phys. Rev. Lett.}\ }\textbf {\bibinfo
			{volume} {87}},\ \bibinfo {pages} {117901} (\bibinfo {year}
		{2001})}\BibitemShut {NoStop}%
	\bibitem [{\citenamefont {Ekert}(1991)}]{ekart1991}%
	\BibitemOpen
	\bibfield  {author} {\bibinfo {author} {\bibfnamefont {A.~K.}\ \bibnamefont
			{Ekert}},\ }\href {\doibase 10.1103/PhysRevLett.67.661} {\bibfield  {journal}
		{\bibinfo  {journal} {Phys. Rev. Lett.}\ }\textbf {\bibinfo {volume} {67}},\
		\bibinfo {pages} {661} (\bibinfo {year} {1991})}\BibitemShut {NoStop}%
	\bibitem [{\citenamefont {Heinosaari}\ \emph {et~al.}(2016)\citenamefont
		{Heinosaari}, \citenamefont {Miyadera},\ and\ \citenamefont
		{Ziman}}]{heinosaari2016invitation}%
	\BibitemOpen
	\bibfield  {author} {\bibinfo {author} {\bibfnamefont {T.}~\bibnamefont
			{Heinosaari}}, \bibinfo {author} {\bibfnamefont {T.}~\bibnamefont
			{Miyadera}}, \ and\ \bibinfo {author} {\bibfnamefont {M.}~\bibnamefont
			{Ziman}},\ }\href {\doibase 10.1088/1751-8113/49/12/123001} {\bibfield
		{journal} {\bibinfo  {journal} {Journal of Physics A: Mathematical and
				Theoretical}\ }\textbf {\bibinfo {volume} {49}},\ \bibinfo {pages} {123001}
		(\bibinfo {year} {2016})}\BibitemShut {NoStop}%
	\bibitem [{\citenamefont {Wolf}\ \emph {et~al.}(2009)\citenamefont {Wolf},
		\citenamefont {Perez-Garcia},\ and\ \citenamefont
		{Fernandez}}]{wolf2009measurements}%
	\BibitemOpen
	\bibfield  {author} {\bibinfo {author} {\bibfnamefont {M.~M.}\ \bibnamefont
			{Wolf}}, \bibinfo {author} {\bibfnamefont {D.}~\bibnamefont {Perez-Garcia}},
		\ and\ \bibinfo {author} {\bibfnamefont {C.}~\bibnamefont {Fernandez}},\
	}\href {\doibase 10.1103/PhysRevLett.103.230402} {\bibfield  {journal}
		{\bibinfo  {journal} {Phys. Rev. Lett.}\ }\textbf {\bibinfo {volume} {103}},\
		\bibinfo {pages} {230402} (\bibinfo {year} {2009})}\BibitemShut {NoStop}%
	\bibitem [{\citenamefont {Quintino}\ \emph {et~al.}(2014)\citenamefont
		{Quintino}, \citenamefont {V\'ertesi},\ and\ \citenamefont
		{Brunner}}]{quintino}%
	\BibitemOpen
	\bibfield  {author} {\bibinfo {author} {\bibfnamefont {M.~T.}\ \bibnamefont
			{Quintino}}, \bibinfo {author} {\bibfnamefont {T.}~\bibnamefont {V\'ertesi}},
		\ and\ \bibinfo {author} {\bibfnamefont {N.}~\bibnamefont {Brunner}},\ }\href
	{\doibase 10.1103/PhysRevLett.113.160402} {\bibfield  {journal} {\bibinfo
			{journal} {Phys. Rev. Lett.}\ }\textbf {\bibinfo {volume} {113}},\ \bibinfo
		{pages} {160402} (\bibinfo {year} {2014})}\BibitemShut {NoStop}%
	\bibitem [{\citenamefont {Uola}\ \emph {et~al.}(2015)\citenamefont {Uola},
		\citenamefont {Budroni}, \citenamefont {G\"uhne},\ and\ \citenamefont
		{Pellonp\"a\"a}}]{ola}%
	\BibitemOpen
	\bibfield  {author} {\bibinfo {author} {\bibfnamefont {R.}~\bibnamefont
			{Uola}}, \bibinfo {author} {\bibfnamefont {C.}~\bibnamefont {Budroni}},
		\bibinfo {author} {\bibfnamefont {O.}~\bibnamefont {G\"uhne}}, \ and\
		\bibinfo {author} {\bibfnamefont {J.-P.}\ \bibnamefont {Pellonp\"a\"a}},\
	}\href {\doibase 10.1103/PhysRevLett.115.230402} {\bibfield  {journal}
		{\bibinfo  {journal} {Phys. Rev. Lett.}\ }\textbf {\bibinfo {volume} {115}},\
		\bibinfo {pages} {230402} (\bibinfo {year} {2015})}\BibitemShut {NoStop}%
	\bibitem [{\citenamefont {Kochen}\ and\ \citenamefont
		{Specker}(1975)}]{kochen1975problem}%
	\BibitemOpen
	\bibfield  {author} {\bibinfo {author} {\bibfnamefont {S.}~\bibnamefont
			{Kochen}}\ and\ \bibinfo {author} {\bibfnamefont {E.~P.}\ \bibnamefont
			{Specker}},\ }\enquote {\bibinfo {title} {The problem of hidden variables in
			quantum mechanics},}\ in\ \href {\doibase 10.1007/978-94-010-1795-4_17}
	{\emph {\bibinfo {booktitle} {The Logico-Algebraic Approach to Quantum
				Mechanics: Volume I: Historical Evolution}}},\ \bibinfo {editor} {edited by\
		\bibinfo {editor} {\bibfnamefont {C.~A.}\ \bibnamefont {Hooker}}}\ (\bibinfo
	{publisher} {Springer Netherlands},\ \bibinfo {address} {Dordrecht},\
	\bibinfo {year} {1975})\ pp.\ \bibinfo {pages} {293--328}\BibitemShut
	{NoStop}%
	\bibitem [{\citenamefont {Budroni}\ \emph {et~al.}(2022)\citenamefont
		{Budroni}, \citenamefont {Cabello}, \citenamefont {G\"uhne}, \citenamefont
		{Kleinmann},\ and\ \citenamefont {Larsson}}]{budroni2021quantum}%
	\BibitemOpen
	\bibfield  {author} {\bibinfo {author} {\bibfnamefont {C.}~\bibnamefont
			{Budroni}}, \bibinfo {author} {\bibfnamefont {A.}~\bibnamefont {Cabello}},
		\bibinfo {author} {\bibfnamefont {O.}~\bibnamefont {G\"uhne}}, \bibinfo
		{author} {\bibfnamefont {M.}~\bibnamefont {Kleinmann}}, \ and\ \bibinfo
		{author} {\bibfnamefont {J.-A.}\ \bibnamefont {Larsson}},\ }\href {\doibase
		10.1103/RevModPhys.94.045007} {\bibfield  {journal} {\bibinfo  {journal}
			{Rev. Mod. Phys.}\ }\textbf {\bibinfo {volume} {94}},\ \bibinfo {pages}
		{045007} (\bibinfo {year} {2022})}\BibitemShut {NoStop}%
	\bibitem [{\citenamefont {Liang}\ \emph {et~al.}(2011)\citenamefont {Liang},
		\citenamefont {Spekkens},\ and\ \citenamefont {Wiseman}}]{LIANG20111}%
	\BibitemOpen
	\bibfield  {author} {\bibinfo {author} {\bibfnamefont {Y.-C.}\ \bibnamefont
			{Liang}}, \bibinfo {author} {\bibfnamefont {R.~W.}\ \bibnamefont {Spekkens}},
		\ and\ \bibinfo {author} {\bibfnamefont {H.~M.}\ \bibnamefont {Wiseman}},\
	}\href {\doibase https://doi.org/10.1016/j.physrep.2011.05.001} {\bibfield
		{journal} {\bibinfo  {journal} {Physics Reports}\ }\textbf {\bibinfo {volume}
			{506}},\ \bibinfo {pages} {1} (\bibinfo {year} {2011})}\BibitemShut {NoStop}%
	\bibitem [{\citenamefont {Ku}\ \emph {et~al.}(2022{\natexlab{a}})\citenamefont
		{Ku}, \citenamefont {Hsieh}, \citenamefont {Chen}, \citenamefont {Chen},\
		and\ \citenamefont {Budroni}}]{huannature22}%
	\BibitemOpen
	\bibfield  {author} {\bibinfo {author} {\bibfnamefont {H.-Y.}\ \bibnamefont
			{Ku}}, \bibinfo {author} {\bibfnamefont {C.-Y.}\ \bibnamefont {Hsieh}},
		\bibinfo {author} {\bibfnamefont {S.-L.}\ \bibnamefont {Chen}}, \bibinfo
		{author} {\bibfnamefont {Y.-N.}\ \bibnamefont {Chen}}, \ and\ \bibinfo
		{author} {\bibfnamefont {C.}~\bibnamefont {Budroni}},\ }\href {\doibase
		10.1038/s41467-022-32466-y} {\bibfield  {journal} {\bibinfo  {journal}
			{Nature Communications}\ }\textbf {\bibinfo {volume} {13}},\ \bibinfo {pages}
		{4973} (\bibinfo {year} {2022}{\natexlab{a}})}\BibitemShut {NoStop}%
	\bibitem [{\citenamefont {Bennett}\ \emph {et~al.}(1999)\citenamefont
		{Bennett}, \citenamefont {DiVincenzo}, \citenamefont {Fuchs}, \citenamefont
		{Mor}, \citenamefont {Rains}, \citenamefont {Shor}, \citenamefont {Smolin},\
		and\ \citenamefont {Wootters}}]{Bennett1999}%
	\BibitemOpen
	\bibfield  {author} {\bibinfo {author} {\bibfnamefont {C.~H.}\ \bibnamefont
			{Bennett}}, \bibinfo {author} {\bibfnamefont {D.~P.}\ \bibnamefont
			{DiVincenzo}}, \bibinfo {author} {\bibfnamefont {C.~A.}\ \bibnamefont
			{Fuchs}}, \bibinfo {author} {\bibfnamefont {T.}~\bibnamefont {Mor}}, \bibinfo
		{author} {\bibfnamefont {E.}~\bibnamefont {Rains}}, \bibinfo {author}
		{\bibfnamefont {P.~W.}\ \bibnamefont {Shor}}, \bibinfo {author}
		{\bibfnamefont {J.~A.}\ \bibnamefont {Smolin}}, \ and\ \bibinfo {author}
		{\bibfnamefont {W.~K.}\ \bibnamefont {Wootters}},\ }\href {\doibase
		10.1103/PhysRevA.59.1070} {\bibfield  {journal} {\bibinfo  {journal} {Phys.
				Rev. A}\ }\textbf {\bibinfo {volume} {59}},\ \bibinfo {pages} {1070}
		(\bibinfo {year} {1999})}\BibitemShut {NoStop}%
	\bibitem [{\citenamefont {Bene}\ and\ \citenamefont
		{V{\'{e}}rtesi}(2018)}]{bene2018measurement}%
	\BibitemOpen
	\bibfield  {author} {\bibinfo {author} {\bibfnamefont {E.}~\bibnamefont
			{Bene}}\ and\ \bibinfo {author} {\bibfnamefont {T.}~\bibnamefont
			{V{\'{e}}rtesi}},\ }\href {\doibase 10.1088/1367-2630/aa9ca3} {\bibfield
		{journal} {\bibinfo  {journal} {New Journal of Physics}\ }\textbf {\bibinfo
			{volume} {20}},\ \bibinfo {pages} {013021} (\bibinfo {year}
		{2018})}\BibitemShut {NoStop}%
	\bibitem [{\citenamefont {Hirsch}\ \emph {et~al.}(2018)\citenamefont {Hirsch},
		\citenamefont {Quintino},\ and\ \citenamefont {Brunner}}]{hirsch2018quantum}%
	\BibitemOpen
	\bibfield  {author} {\bibinfo {author} {\bibfnamefont {F.}~\bibnamefont
			{Hirsch}}, \bibinfo {author} {\bibfnamefont {M.~T.}\ \bibnamefont
			{Quintino}}, \ and\ \bibinfo {author} {\bibfnamefont {N.}~\bibnamefont
			{Brunner}},\ }\href {\doibase 10.1103/PhysRevA.97.012129} {\bibfield
		{journal} {\bibinfo  {journal} {Phys. Rev. A}\ }\textbf {\bibinfo {volume}
			{97}},\ \bibinfo {pages} {012129} (\bibinfo {year} {2018})}\BibitemShut
	{NoStop}%
	\bibitem [{\citenamefont {Svetlichny}(1987)}]{Svetlichny1987}%
	\BibitemOpen
	\bibfield  {author} {\bibinfo {author} {\bibfnamefont {G.}~\bibnamefont
			{Svetlichny}},\ }\href {\doibase 10.1103/PhysRevD.35.3066} {\bibfield
		{journal} {\bibinfo  {journal} {Phys. Rev. D}\ }\textbf {\bibinfo {volume}
			{35}},\ \bibinfo {pages} {3066} (\bibinfo {year} {1987})}\BibitemShut
	{NoStop}%
	\bibitem [{\citenamefont {Ajoy}\ and\ \citenamefont
		{Rungta}(2010)}]{Rungta2010}%
	\BibitemOpen
	\bibfield  {author} {\bibinfo {author} {\bibfnamefont {A.}~\bibnamefont
			{Ajoy}}\ and\ \bibinfo {author} {\bibfnamefont {P.}~\bibnamefont {Rungta}},\
	}\href {\doibase 10.1103/PhysRevA.81.052334} {\bibfield  {journal} {\bibinfo
			{journal} {Phys. Rev. A}\ }\textbf {\bibinfo {volume} {81}},\ \bibinfo
		{pages} {052334} (\bibinfo {year} {2010})}\BibitemShut {NoStop}%
	\bibitem [{\citenamefont {Bancal}\ \emph {et~al.}(2013)\citenamefont {Bancal},
		\citenamefont {Barrett}, \citenamefont {Gisin},\ and\ \citenamefont
		{Pironio}}]{Bancal2013}%
	\BibitemOpen
	\bibfield  {author} {\bibinfo {author} {\bibfnamefont {J.-D.}\ \bibnamefont
			{Bancal}}, \bibinfo {author} {\bibfnamefont {J.}~\bibnamefont {Barrett}},
		\bibinfo {author} {\bibfnamefont {N.}~\bibnamefont {Gisin}}, \ and\ \bibinfo
		{author} {\bibfnamefont {S.}~\bibnamefont {Pironio}},\ }\href {\doibase
		10.1103/PhysRevA.88.014102} {\bibfield  {journal} {\bibinfo  {journal} {Phys.
				Rev. A}\ }\textbf {\bibinfo {volume} {88}},\ \bibinfo {pages} {014102}
		(\bibinfo {year} {2013})}\BibitemShut {NoStop}%
	\bibitem [{\citenamefont {Pal}\ and\ \citenamefont {Ghosh}(2015)}]{pal2015non}%
	\BibitemOpen
	\bibfield  {author} {\bibinfo {author} {\bibfnamefont {R.}~\bibnamefont
			{Pal}}\ and\ \bibinfo {author} {\bibfnamefont {S.}~\bibnamefont {Ghosh}},\
	}\href {\doibase 10.1088/1751-8113/48/15/155302} {\bibfield  {journal}
		{\bibinfo  {journal} {Journal of Physics A: Mathematical and Theoretical}\
		}\textbf {\bibinfo {volume} {48}},\ \bibinfo {pages} {155302} (\bibinfo
		{year} {2015})}\BibitemShut {NoStop}%
	\bibitem [{\citenamefont {Heinosaari}\ \emph {et~al.}(2015)\citenamefont
		{Heinosaari}, \citenamefont {Kiukas}, \citenamefont {Reitzner},\ and\
		\citenamefont {Schultz}}]{heinosaari2015}%
	\BibitemOpen
	\bibfield  {author} {\bibinfo {author} {\bibfnamefont {T.}~\bibnamefont
			{Heinosaari}}, \bibinfo {author} {\bibfnamefont {J.}~\bibnamefont {Kiukas}},
		\bibinfo {author} {\bibfnamefont {D.}~\bibnamefont {Reitzner}}, \ and\
		\bibinfo {author} {\bibfnamefont {J.}~\bibnamefont {Schultz}},\ }\href
	{\doibase 10.1088/1751-8113/48/43/435301} {\ \textbf {\bibinfo {volume}
			{48}},\ \bibinfo {pages} {435301} (\bibinfo {year} {2015})}\BibitemShut
	{NoStop}%
	\bibitem [{\citenamefont {Heinosaari}\ and\ \citenamefont
		{Miyadera}(2017)}]{heinosaari2015incompatibility}%
	\BibitemOpen
	\bibfield  {author} {\bibinfo {author} {\bibfnamefont {T.}~\bibnamefont
			{Heinosaari}}\ and\ \bibinfo {author} {\bibfnamefont {T.}~\bibnamefont
			{Miyadera}},\ }\href {\doibase 10.1088/1751-8121/aa5f6b} {\bibfield
		{journal} {\bibinfo  {journal} {Journal of Physics A: Mathematical and
				Theoretical}\ }\textbf {\bibinfo {volume} {50}},\ \bibinfo {pages} {135302}
		(\bibinfo {year} {2017})}\BibitemShut {NoStop}%
	\bibitem [{\citenamefont {Horodecki}\ \emph {et~al.}(2003)\citenamefont
		{Horodecki}, \citenamefont {Shor},\ and\ \citenamefont
		{Ruskai}}]{horodecki10}%
	\BibitemOpen
	\bibfield  {author} {\bibinfo {author} {\bibfnamefont {M.}~\bibnamefont
			{Horodecki}}, \bibinfo {author} {\bibfnamefont {P.~W.}\ \bibnamefont {Shor}},
		\ and\ \bibinfo {author} {\bibfnamefont {M.~B.}\ \bibnamefont {Ruskai}},\
	}\href {\doibase 10.1142/s0129055x03001709} {\bibfield  {journal} {\bibinfo
			{journal} {Reviews in Mathematical Physics}\ }\textbf {\bibinfo {volume}
			{15}},\ \bibinfo {pages} {629–641} (\bibinfo {year} {2003})}\BibitemShut
	{NoStop}%
	\bibitem [{\citenamefont {Kiukas}\ \emph {et~al.}(2017)\citenamefont {Kiukas},
		\citenamefont {Budroni}, \citenamefont {Uola},\ and\ \citenamefont
		{Pellonp\"a\"a}}]{Kiukas2017}%
	\BibitemOpen
	\bibfield  {author} {\bibinfo {author} {\bibfnamefont {J.}~\bibnamefont
			{Kiukas}}, \bibinfo {author} {\bibfnamefont {C.}~\bibnamefont {Budroni}},
		\bibinfo {author} {\bibfnamefont {R.}~\bibnamefont {Uola}}, \ and\ \bibinfo
		{author} {\bibfnamefont {J.-P.}\ \bibnamefont {Pellonp\"a\"a}},\ }\href
	{\doibase 10.1103/PhysRevA.96.042331} {\bibfield  {journal} {\bibinfo
			{journal} {Phys. Rev. A}\ }\textbf {\bibinfo {volume} {96}},\ \bibinfo
		{pages} {042331} (\bibinfo {year} {2017})}\BibitemShut {NoStop}%
	\bibitem [{\citenamefont {Ku}\ \emph {et~al.}(2022{\natexlab{b}})\citenamefont
		{Ku}, \citenamefont {Kadlec}, \citenamefont {Cernoch}, \citenamefont
		{Quintino}, \citenamefont {Zhou}, \citenamefont {Lemr}, \citenamefont
		{Lambert}, \citenamefont {Miranowicz}, \citenamefont {Chen}, \citenamefont
		{Nori},\ and\ \citenamefont {Chen}}]{huanyu2022}%
	\BibitemOpen
	\bibfield  {author} {\bibinfo {author} {\bibfnamefont {H.-Y.}\ \bibnamefont
			{Ku}}, \bibinfo {author} {\bibfnamefont {J.}~\bibnamefont {Kadlec}}, \bibinfo
		{author} {\bibfnamefont {A.}~\bibnamefont {Cernoch}}, \bibinfo {author}
		{\bibfnamefont {M.~T.}\ \bibnamefont {Quintino}}, \bibinfo {author}
		{\bibfnamefont {W.}~\bibnamefont {Zhou}}, \bibinfo {author} {\bibfnamefont
			{K.}~\bibnamefont {Lemr}}, \bibinfo {author} {\bibfnamefont {N.}~\bibnamefont
			{Lambert}}, \bibinfo {author} {\bibfnamefont {A.}~\bibnamefont {Miranowicz}},
		\bibinfo {author} {\bibfnamefont {S.-L.}\ \bibnamefont {Chen}}, \bibinfo
		{author} {\bibfnamefont {F.}~\bibnamefont {Nori}}, \ and\ \bibinfo {author}
		{\bibfnamefont {Y.-N.}\ \bibnamefont {Chen}},\ }\href {\doibase
		10.1103/PRXQuantum.3.020338} {\bibfield  {journal} {\bibinfo  {journal} {PRX
				Quantum}\ }\textbf {\bibinfo {volume} {3}},\ \bibinfo {pages} {020338}
		(\bibinfo {year} {2022}{\natexlab{b}})}\BibitemShut {NoStop}%
	\bibitem [{\citenamefont {Collins}\ \emph {et~al.}(2002)\citenamefont
		{Collins}, \citenamefont {Gisin}, \citenamefont {Popescu}, \citenamefont
		{Roberts},\ and\ \citenamefont {Scarani}}]{Collins2022}%
	\BibitemOpen
	\bibfield  {author} {\bibinfo {author} {\bibfnamefont {D.}~\bibnamefont
			{Collins}}, \bibinfo {author} {\bibfnamefont {N.}~\bibnamefont {Gisin}},
		\bibinfo {author} {\bibfnamefont {S.}~\bibnamefont {Popescu}}, \bibinfo
		{author} {\bibfnamefont {D.}~\bibnamefont {Roberts}}, \ and\ \bibinfo
		{author} {\bibfnamefont {V.}~\bibnamefont {Scarani}},\ }\href {\doibase
		10.1103/PhysRevLett.88.170405} {\bibfield  {journal} {\bibinfo  {journal}
			{Phys. Rev. Lett.}\ }\textbf {\bibinfo {volume} {88}},\ \bibinfo {pages}
		{170405} (\bibinfo {year} {2002})}\BibitemShut {NoStop}%
	\bibitem [{\citenamefont {Busch}(1986)}]{pbusch}%
	\BibitemOpen
	\bibfield  {author} {\bibinfo {author} {\bibfnamefont {P.}~\bibnamefont
			{Busch}},\ }\href {\doibase 10.1103/PhysRevD.33.2253} {\bibfield  {journal}
		{\bibinfo  {journal} {Phys. Rev. D}\ }\textbf {\bibinfo {volume} {33}},\
		\bibinfo {pages} {2253} (\bibinfo {year} {1986})}\BibitemShut {NoStop}%
	\bibitem [{\citenamefont {Nielsen}\ and\ \citenamefont
		{Chuang}(2010)}]{nielsenchuangbook}%
	\BibitemOpen
	\bibfield  {author} {\bibinfo {author} {\bibfnamefont {M.~A.}\ \bibnamefont
			{Nielsen}}\ and\ \bibinfo {author} {\bibfnamefont {I.~L.}\ \bibnamefont
			{Chuang}},\ }\href {\doibase 10.1017/CBO9780511976667} {\emph {\bibinfo
			{title} {Quantum Computation and Quantum Information: 10th Anniversary
				Edition}}}\ (\bibinfo  {publisher} {Cambridge University Press},\ \bibinfo
	{year} {2010})\BibitemShut {NoStop}%
	\bibitem [{\citenamefont {Yu}\ \emph {et~al.}(2010)\citenamefont {Yu},
		\citenamefont {Liu}, \citenamefont {Li},\ and\ \citenamefont {Oh}}]{Yu2010}%
	\BibitemOpen
	\bibfield  {author} {\bibinfo {author} {\bibfnamefont {S.}~\bibnamefont
			{Yu}}, \bibinfo {author} {\bibfnamefont {N.-l.}\ \bibnamefont {Liu}},
		\bibinfo {author} {\bibfnamefont {L.}~\bibnamefont {Li}}, \ and\ \bibinfo
		{author} {\bibfnamefont {C.~H.}\ \bibnamefont {Oh}},\ }\href {\doibase
		10.1103/PhysRevA.81.062116} {\bibfield  {journal} {\bibinfo  {journal} {Phys.
				Rev. A}\ }\textbf {\bibinfo {volume} {81}},\ \bibinfo {pages} {062116}
		(\bibinfo {year} {2010})}\BibitemShut {NoStop}%
	\bibitem [{\citenamefont {Heinosaari}(2020)}]{Heinosaari2020}%
	\BibitemOpen
	\bibfield  {author} {\bibinfo {author} {\bibfnamefont {T.}~\bibnamefont
			{Heinosaari}},\ }\href {\doibase 10.1088/1742-6596/1638/1/012002} {\bibfield
		{journal} {\bibinfo  {journal} {Journal of Physics: Conference Series}\
		}\textbf {\bibinfo {volume} {1638}},\ \bibinfo {pages} {012002} (\bibinfo
		{year} {2020})}\BibitemShut {NoStop}%
	\bibitem [{\citenamefont {Mermin}(1990)}]{Mermin1990}%
	\BibitemOpen
	\bibfield  {author} {\bibinfo {author} {\bibfnamefont {N.~D.}\ \bibnamefont
			{Mermin}},\ }\href {\doibase 10.1103/PhysRevLett.65.1838} {\bibfield
		{journal} {\bibinfo  {journal} {Phys. Rev. Lett.}\ }\textbf {\bibinfo
			{volume} {65}},\ \bibinfo {pages} {1838} (\bibinfo {year}
		{1990})}\BibitemShut {NoStop}%
	\bibitem [{\citenamefont {Cereceda}(2002)}]{Cereceda2002}%
	\BibitemOpen
	\bibfield  {author} {\bibinfo {author} {\bibfnamefont {J.~L.}\ \bibnamefont
			{Cereceda}},\ }\href {\doibase 10.1103/PhysRevA.66.024102} {\bibfield
		{journal} {\bibinfo  {journal} {Phys. Rev. A}\ }\textbf {\bibinfo {volume}
			{66}},\ \bibinfo {pages} {024102} (\bibinfo {year} {2002})}\BibitemShut
	{NoStop}%
	\bibitem [{\citenamefont {Mitchell}\ \emph {et~al.}(2004)\citenamefont
		{Mitchell}, \citenamefont {Popescu},\ and\ \citenamefont
		{Roberts}}]{Mitchell2004}%
	\BibitemOpen
	\bibfield  {author} {\bibinfo {author} {\bibfnamefont {P.}~\bibnamefont
			{Mitchell}}, \bibinfo {author} {\bibfnamefont {S.}~\bibnamefont {Popescu}}, \
		and\ \bibinfo {author} {\bibfnamefont {D.}~\bibnamefont {Roberts}},\ }\href
	{\doibase 10.1103/PhysRevA.70.060101} {\bibfield  {journal} {\bibinfo
			{journal} {Phys. Rev. A}\ }\textbf {\bibinfo {volume} {70}},\ \bibinfo
		{pages} {060101} (\bibinfo {year} {2004})}\BibitemShut {NoStop}%
	\bibitem [{\citenamefont {Ghose}\ \emph {et~al.}(2009)\citenamefont {Ghose},
		\citenamefont {Sinclair}, \citenamefont {Debnath}, \citenamefont {Rungta},\
		and\ \citenamefont {Stock}}]{SGhose2009}%
	\BibitemOpen
	\bibfield  {author} {\bibinfo {author} {\bibfnamefont {S.}~\bibnamefont
			{Ghose}}, \bibinfo {author} {\bibfnamefont {N.}~\bibnamefont {Sinclair}},
		\bibinfo {author} {\bibfnamefont {S.}~\bibnamefont {Debnath}}, \bibinfo
		{author} {\bibfnamefont {P.}~\bibnamefont {Rungta}}, \ and\ \bibinfo {author}
		{\bibfnamefont {R.}~\bibnamefont {Stock}},\ }\href {\doibase
		10.1103/PhysRevLett.102.250404} {\bibfield  {journal} {\bibinfo  {journal}
			{Phys. Rev. Lett.}\ }\textbf {\bibinfo {volume} {102}},\ \bibinfo {pages}
		{250404} (\bibinfo {year} {2009})}\BibitemShut {NoStop}%
	\bibitem [{\citenamefont {Su}\ \emph {et~al.}(2018)\citenamefont {Su},
		\citenamefont {Li},\ and\ \citenamefont {Ling}}]{Su2018}%
	\BibitemOpen
	\bibfield  {author} {\bibinfo {author} {\bibfnamefont {Z.}~\bibnamefont
			{Su}}, \bibinfo {author} {\bibfnamefont {L.}~\bibnamefont {Li}}, \ and\
		\bibinfo {author} {\bibfnamefont {J.}~\bibnamefont {Ling}},\ }\href
	{https://doi.org/10.1007/s11128-018-1852-7} {\bibfield  {journal} {\bibinfo
			{journal} {Quantum Information Processing}\ }\textbf {\bibinfo {volume}
			{17}},\ \bibinfo {pages} {1} (\bibinfo {year} {2018})}\BibitemShut {NoStop}%
	\bibitem [{\citenamefont {Siddiqui}\ and\ \citenamefont
		{Sazim}(2019)}]{Siddiqui2019}%
	\BibitemOpen
	\bibfield  {author} {\bibinfo {author} {\bibfnamefont {M.~A.}\ \bibnamefont
			{Siddiqui}}\ and\ \bibinfo {author} {\bibfnamefont {S.}~\bibnamefont
			{Sazim}},\ }\href {\doibase 10.1007/s11128-019-2246-1} {\bibfield  {journal}
		{\bibinfo  {journal} {Quantum Information Processing}\ }\textbf {\bibinfo
			{volume} {18}} (\bibinfo {year} {2019}),\
		10.1007/s11128-019-2246-1}\BibitemShut {NoStop}%
	\bibitem [{\citenamefont {Li}\ \emph {et~al.}(2017)\citenamefont {Li},
		\citenamefont {Shen}, \citenamefont {Jing}, \citenamefont {Fei},\ and\
		\citenamefont {Li-Jost}}]{li2017tight}%
	\BibitemOpen
	\bibfield  {author} {\bibinfo {author} {\bibfnamefont {M.}~\bibnamefont
			{Li}}, \bibinfo {author} {\bibfnamefont {S.}~\bibnamefont {Shen}}, \bibinfo
		{author} {\bibfnamefont {N.}~\bibnamefont {Jing}}, \bibinfo {author}
		{\bibfnamefont {S.-M.}\ \bibnamefont {Fei}}, \ and\ \bibinfo {author}
		{\bibfnamefont {X.}~\bibnamefont {Li-Jost}},\ }\href {\doibase
		10.1103/PhysRevA.96.042323} {\bibfield  {journal} {\bibinfo  {journal} {Phys.
				Rev. A}\ }\textbf {\bibinfo {volume} {96}},\ \bibinfo {pages} {042323}
		(\bibinfo {year} {2017})}\BibitemShut {NoStop}%
	\bibitem [{\citenamefont {{Beth Ruskai}}\ \emph {et~al.}(2002)\citenamefont
		{{Beth Ruskai}}, \citenamefont {Szarek},\ and\ \citenamefont
		{Werner}}]{Ruskai2002}%
	\BibitemOpen
	\bibfield  {author} {\bibinfo {author} {\bibfnamefont {M.}~\bibnamefont
			{{Beth Ruskai}}}, \bibinfo {author} {\bibfnamefont {S.}~\bibnamefont
			{Szarek}}, \ and\ \bibinfo {author} {\bibfnamefont {E.}~\bibnamefont
			{Werner}},\ }\href {\doibase https://doi.org/10.1016/S0024-3795(01)00547-X}
	{\bibfield  {journal} {\bibinfo  {journal} {Linear Algebra and its
				Applications}\ }\textbf {\bibinfo {volume} {347}},\ \bibinfo {pages} {159}
		(\bibinfo {year} {2002})}\BibitemShut {NoStop}%
	\bibitem [{\citenamefont {Naikoo}\ \emph {et~al.}(2021)\citenamefont {Naikoo},
		\citenamefont {Banerjee}, \citenamefont {Pan},\ and\ \citenamefont
		{Ghosh}}]{jn2021}%
	\BibitemOpen
	\bibfield  {author} {\bibinfo {author} {\bibfnamefont {J.}~\bibnamefont
			{Naikoo}}, \bibinfo {author} {\bibfnamefont {S.}~\bibnamefont {Banerjee}},
		\bibinfo {author} {\bibfnamefont {A.~K.}\ \bibnamefont {Pan}}, \ and\
		\bibinfo {author} {\bibfnamefont {S.}~\bibnamefont {Ghosh}},\ }\href
	{\doibase 10.1103/PhysRevA.104.042608} {\bibfield  {journal} {\bibinfo
			{journal} {Phys. Rev. A}\ }\textbf {\bibinfo {volume} {104}},\ \bibinfo
		{pages} {042608} (\bibinfo {year} {2021})}\BibitemShut {NoStop}%
	\bibitem [{\citenamefont {Kar}\ and\ \citenamefont {Roy}(1995)}]{KAR199512}%
	\BibitemOpen
	\bibfield  {author} {\bibinfo {author} {\bibfnamefont {G.}~\bibnamefont
			{Kar}}\ and\ \bibinfo {author} {\bibfnamefont {S.}~\bibnamefont {Roy}},\
	}\href {\doibase https://doi.org/10.1016/0375-9601(95)00060-G} {\bibfield
		{journal} {\bibinfo  {journal} {Physics Letters A}\ }\textbf {\bibinfo
			{volume} {199}},\ \bibinfo {pages} {12} (\bibinfo {year} {1995})}\BibitemShut
	{NoStop}%
	\bibitem [{\citenamefont {D\"ur}\ \emph {et~al.}(2000)\citenamefont {D\"ur},
		\citenamefont {Vidal},\ and\ \citenamefont {Cirac}}]{Dur2000}%
	\BibitemOpen
	\bibfield  {author} {\bibinfo {author} {\bibfnamefont {W.}~\bibnamefont
			{D\"ur}}, \bibinfo {author} {\bibfnamefont {G.}~\bibnamefont {Vidal}}, \ and\
		\bibinfo {author} {\bibfnamefont {J.~I.}\ \bibnamefont {Cirac}},\ }\href
	{\doibase 10.1103/PhysRevA.62.062314} {\bibfield  {journal} {\bibinfo
			{journal} {Phys. Rev. A}\ }\textbf {\bibinfo {volume} {62}},\ \bibinfo
		{pages} {062314} (\bibinfo {year} {2000})}\BibitemShut {NoStop}%
	\bibitem [{\citenamefont {Greenberger}\ \emph {et~al.}(2007)\citenamefont
		{Greenberger}, \citenamefont {Horne},\ and\ \citenamefont
		{Zeilinger}}]{GHZ2007}%
	\BibitemOpen
	\bibfield  {author} {\bibinfo {author} {\bibfnamefont {D.~M.}\ \bibnamefont
			{Greenberger}}, \bibinfo {author} {\bibfnamefont {M.~A.}\ \bibnamefont
			{Horne}}, \ and\ \bibinfo {author} {\bibfnamefont {A.}~\bibnamefont
			{Zeilinger}},\ }\href@noop {} {\enquote {\bibinfo {title} {Going beyond
				bell's theorem},}\ } (\bibinfo {year} {2007}),\ \Eprint
	{http://arxiv.org/abs/0712.0921} {arXiv:0712.0921 [quant-ph]} \BibitemShut
	{NoStop}%
	\bibitem [{\citenamefont {Carteret}\ and\ \citenamefont
		{Sudbery}(2000)}]{Carteret2000}%
	\BibitemOpen
	\bibfield  {author} {\bibinfo {author} {\bibfnamefont {H.~A.}\ \bibnamefont
			{Carteret}}\ and\ \bibinfo {author} {\bibfnamefont {A.}~\bibnamefont
			{Sudbery}},\ }\href {\doibase 10.1088/0305-4470/33/28/303} {\bibfield
		{journal} {\bibinfo  {journal} {Journal of Physics A: Mathematical and
				General}\ }\textbf {\bibinfo {volume} {33}},\ \bibinfo {pages} {4981}
		(\bibinfo {year} {2000})}\BibitemShut {NoStop}%
	\bibitem [{\citenamefont {Liu}\ \emph {et~al.}(2016)\citenamefont {Liu},
		\citenamefont {Mo},\ and\ \citenamefont {Sun}}]{liu2016controlled}%
	\BibitemOpen
	\bibfield  {author} {\bibinfo {author} {\bibfnamefont {J.}~\bibnamefont
			{Liu}}, \bibinfo {author} {\bibfnamefont {Z.-w.}\ \bibnamefont {Mo}}, \ and\
		\bibinfo {author} {\bibfnamefont {S.-q.}\ \bibnamefont {Sun}},\ }\href
	{https://doi.org/10.1007/s10773-015-2857-y} {\bibfield  {journal} {\bibinfo
			{journal} {International Journal of Theoretical Physics}\ }\textbf {\bibinfo
			{volume} {55}},\ \bibinfo {pages} {2182} (\bibinfo {year}
		{2016})}\BibitemShut {NoStop}%
	\bibitem [{\citenamefont {Augusiak}\ \emph {et~al.}(2015)\citenamefont
		{Augusiak}, \citenamefont {Demianowicz}, \citenamefont {Tura},\ and\
		\citenamefont {Ac\'{\i}n}}]{Augusiak2015}%
	\BibitemOpen
	\bibfield  {author} {\bibinfo {author} {\bibfnamefont {R.}~\bibnamefont
			{Augusiak}}, \bibinfo {author} {\bibfnamefont {M.}~\bibnamefont
			{Demianowicz}}, \bibinfo {author} {\bibfnamefont {J.}~\bibnamefont {Tura}}, \
		and\ \bibinfo {author} {\bibfnamefont {A.}~\bibnamefont {Ac\'{\i}n}},\ }\href
	{\doibase 10.1103/PhysRevLett.115.030404} {\bibfield  {journal} {\bibinfo
			{journal} {Phys. Rev. Lett.}\ }\textbf {\bibinfo {volume} {115}},\ \bibinfo
		{pages} {030404} (\bibinfo {year} {2015})}\BibitemShut {NoStop}%
	\bibitem [{\citenamefont {King}\ and\ \citenamefont
		{Ruskai}(2001)}]{king2001minimal}%
	\BibitemOpen
	\bibfield  {author} {\bibinfo {author} {\bibfnamefont {C.}~\bibnamefont
			{King}}\ and\ \bibinfo {author} {\bibfnamefont {M.}~\bibnamefont {Ruskai}},\
	}\href {\doibase 10.1109/18.904522} {\bibfield  {journal} {\bibinfo
			{journal} {IEEE Transactions on Information Theory}\ }\textbf {\bibinfo
			{volume} {47}},\ \bibinfo {pages} {192} (\bibinfo {year} {2001})}\BibitemShut
	{NoStop}%
	\bibitem [{\citenamefont {Acín}\ \emph {et~al.}(2001)\citenamefont {Acín},
		\citenamefont {Andrianov}, \citenamefont {Jané},\ and\ \citenamefont
		{Tarrach}}]{acin2001three}%
	\BibitemOpen
	\bibfield  {author} {\bibinfo {author} {\bibfnamefont {A.}~\bibnamefont
			{Acín}}, \bibinfo {author} {\bibfnamefont {A.}~\bibnamefont {Andrianov}},
		\bibinfo {author} {\bibfnamefont {E.}~\bibnamefont {Jané}}, \ and\ \bibinfo
		{author} {\bibfnamefont {R.}~\bibnamefont {Tarrach}},\ }\href {\doibase
		10.1088/0305-4470/34/35/301} {\bibfield  {journal} {\bibinfo  {journal}
			{Journal of Physics A: Mathematical and General}\ }\textbf {\bibinfo {volume}
			{34}},\ \bibinfo {pages} {6725} (\bibinfo {year} {2001})}\BibitemShut
	{NoStop}%
	\bibitem [{\citenamefont {Zhang}\ \emph {et~al.}(2020)\citenamefont {Zhang},
		\citenamefont {Bravo}, \citenamefont {Lorenz},\ and\ \citenamefont
		{Chitambar}}]{Zhang2020}%
	\BibitemOpen
	\bibfield  {author} {\bibinfo {author} {\bibfnamefont {Y.}~\bibnamefont
			{Zhang}}, \bibinfo {author} {\bibfnamefont {R.~A.}\ \bibnamefont {Bravo}},
		\bibinfo {author} {\bibfnamefont {V.~O.}\ \bibnamefont {Lorenz}}, \ and\
		\bibinfo {author} {\bibfnamefont {E.}~\bibnamefont {Chitambar}},\ }\href
	{\doibase 10.1088/0305-4470/33/28/303} {\bibfield  {journal} {\bibinfo
			{journal} {New Journal of Physics}\ }\textbf {\bibinfo {volume} {22}},\
		\bibinfo {pages} {043003} (\bibinfo {year} {2020})}\BibitemShut {NoStop}%
\end{thebibliography}
%
\end{document}